\pdfoutput=1
\documentclass[final,finalrunningheads,a4paper,orivec]{llncs}
\pagestyle{plain}

\usepackage{xpatch}

\newtoggle{anonymous}
\togglefalse{anonymous}

\newtoggle{arxiv}
\toggletrue{arxiv}

\usepackage[utf8]{inputenc}

\DeclareMathSymbol{:}{\mathpunct}{operators}{"3A}

\usepackage{breakurl}
\usepackage{eucal}

\usepackage{centernot}
\usepackage{mdframed}

\usepackage{amsfonts} 
\usepackage{amssymb}

\usepackage{pifont}
\usepackage{multirow}

\usepackage{paralist}
\usepackage{mathtools}
\usepackage{url}
\usepackage{extarrows}

\usepackage{etoolbox}
\usepackage{xspace} 

\usepackage{stmaryrd} 

\usepackage[inline,shortlabels]{enumitem} 
\newlist{inlinelist}{enumerate*}{1}
\setlist*[inlinelist,1]{%
  label=(\roman*),
}

\usepackage{xifthen}  
\usepackage{ifthen}

\usepackage{nicefrac}
\usepackage{caption,subcaption}

\usepackage{color}

\usepackage{tikz} 
\usepackage{tikz-cd} 
\usepackage{float}

\usepackage{chngcntr}
\usepackage{apptools}
\AtAppendix{\counterwithin{definition}{section}}
\AtAppendix{\counterwithin{example}{section}}
\AtAppendix{\counterwithin{lemma}{section}}
\AtAppendix{\counterwithin{figure}{section}}

\usepackage{hyperref}
\usepackage{cleveref}

\usepackage[final,nomargin,inline,index]{fixme} 
\fxusetheme{color}

\definecolor{Black}{HTML}{000000}
\definecolor{Gray}{HTML}{808080}
\definecolor{Magenta}{HTML}{FF00FF}
\definecolor{RubineRed}{HTML}{ED017D}
\definecolor{ForestGreen}{HTML}{028A0F}
\definecolor{OliveGreen}{HTML}{808000}
\definecolor{MidnightBlue}{HTML}{006795}
\definecolor{Plum}{HTML}{92268F}

\FXRegisterAuthor{bart}{anbart}{\color{magenta} {\underline{bart}}}
\FXRegisterAuthor{em}{anem}{\color{violet} {\underline{emily}}}


\hypersetup{
  breaklinks   = true,
  colorlinks   = true, 
  urlcolor     = blue, 
  linkcolor    = blue, 
  citecolor    = red   
}
\hypersetup{final} 

\usepackage{listings}
\definecolor{listingBG}{HTML}{FFFFCB}%
\definecolor{listingFrame}{HTML}{BBBB98}%
\definecolor{listingLineno}{rgb}{0.5,0.5,1.0}%
\crefname{lstlisting}{Listing}{Listings}  


\definecolor{LightGrey}{rgb}{0.985,0.985,0.985}
\definecolor{LstBackground}{rgb}{0.975,0.975,0.975}



\lstset{
  	backgroundcolor=\color{LstBackground},
	nolol=true,
	breaklines=true,
	xleftmargin=10pt,
	xrightmargin=3pt,
	framexleftmargin=5pt,
	framextopmargin=2pt,
	framexbottommargin=2pt, 
	showstringspaces=false,
	basicstyle=\fontseries{m}\footnotesize\ttfamily,
        numberstyle=\fontsize{4}{4}\color{MidnightBlue}\ttfamily,
        stringstyle=\color{RoyalBlue}
}

\lstdefinelanguage{txscript}{
	commentstyle=\color{Gray},
	morecomment=[l]{//},
	morecomment=[s]{/*}{*/},
	classoffset=0,
        escapeinside={(*}{*)},
	morekeywords={if,then,else,for,in,transfer,pay,pays,contract,skip,require,mapping,int,uint,address,abort,return},
	keywordstyle=\color{Plum},
	classoffset=1,
	morekeywords={sender},
	keywordstyle=\addrColor,
	classoffset=2,        
	morekeywords={origin},
	keywordstyle=\pmvColor,
	classoffset=3,        
        morekeywords={constructor},
	keywordstyle=\color{MidnightBlue}\bfseries,
	basicstyle=\fontseries{m}\normalsize\ttfamily
	\lst@ifdisplaystyle\scriptsize\fi,
}


%
%

\newcommand{\ifempty}[3]{%
  \ifthenelse{\isempty{#1}}{#2}{#3}%
}

\newcommand{\ifdots}[3]{%
  \ifthenelse{\equal{#1}{...}}{#2}{#3}%
}

\newtoggle{hidden}
\toggletrue{hidden}


%

\makeatletter
\newcommand*{\itemequation}[3][]{%
  \item
  \begingroup
    \refstepcounter{equation}%
    \ifx\\#1\\%
    \else  
      \label{#1}%
    \fi
    \sbox0{#2}%
    \sbox2{$\displaystyle#3\m@th$}%
    \sbox4{\@eqnnum}%
    \dimen@=.5\dimexpr\linewidth-\wd2\relax
    \ifcase
        \ifdim\wd0>\dimen@
          \z@
        \else
          \ifdim\wd4>\dimen@
            \z@
          \else 
            \@ne
          \fi 
        \fi
      \@latex@warning{Equation is too large}%
    \fi
    \noindent   
    \rlap{\copy0}%
    \rlap{\hbox to \linewidth{\hfill\copy2\hfill}}%
    \hbox to \linewidth{\hfill\copy4}%
    \hspace{0pt}
  \endgroup
  \ignorespaces 
}
\makeatother



\newcommand{\Real}[1]{\mathrm{Real}}

\newcommand{\codefont}{\fontsize{9}{9}\selectfont}
\newcommand{\code}[1]{{\tt\codefont{#1}}}
\newcommand{\contract}[2][]{{\tt\codefont{\cmvColor{#2_{#1}}}}}
\newcommand{\txcode}[1]{{\tt\codefont{\txColor{#1}}}}

\newcommand{\sender}{{\addrColor{\code{sender}}}}
\newcommand{\origin}{{\pmvColor{\code{origin}}}}
\newcommand{\callee}[1]{{\cmvColor{\it callee}({#1})}}
\newcommand{\deps}[1]{{\cmvColor{\it deps}}({#1})}




\newcommand{\Eg}{E.g.\@\xspace}
\newcommand{\eg}{e.g.\@\xspace}
\newcommand{\ie}{i.e.\@\xspace}
\newcommand{\wrt}{w.r.t.\@\xspace}

\newcommand{\Wlog}{W.l.o.g.\@\xspace}

\renewcommand{\epsilon}{\varepsilon}

\newcommand{\fst}{\mathit{fst}}





%
  {}





\def\addrColor{\color{magenta}}
\newcommand{\addrFmt}[1]{{\addrColor{\tt #1}}}
\newcommand{\addr}[2][]{\addrFmt{#2}_{\addrColor{#1}}\xspace}
\newcommand{\AddrA}[1][]{{\addrColor{\mathcal{A}_{#1}}}} 
\newcommand{\AddrB}[1][]{{\addrColor{\mathcal{B}_{#1}}}} 

\newcommand{\AddrU}[1][]{\addrFmt{\mathbb{A}}_{\addrColor{#1}}} 

\newcommand{\cmvOfcst}[1]{\dag{#1}}
\newcommand{\strip}[2]{{#1}\!\upharpoonright_{#2}}
\renewcommand{\strip}[2]{{#1} \cap \deps{#2}}

\def\pmvColor{\color{red}}
\newcommand{\pmvFmt}[1]{{\pmvColor{\tt #1}}}
\newcommand{\pmv}[2][]{\pmvFmt{#2}_{\pmvColor{#1}}\xspace}
\newcommand{\pmvA}[1][]{\pmv[{#1}]{A}} 
\newcommand{\pmvB}[1][]{\pmv[{#1}]{B}}

\newcommand{\pmvM}[1][]{\pmv[#1]{M}} 

\newcommand{\PmvM}[1][]{\pmvFmt{\mathcal{M}}_{\pmvColor{#1}}} 

\newcommand{\Adv}{\PmvM} 
\newcommand{\PmvU}{\pmvFmt{\mathbb{A}}_{\pmvColor{u}}} 


\def\cmvColor{\color{blue}}
\newcommand{\cmvFmt}[1]{{\cmvColor{\code{#1}}}}
\newcommand{\cmv}[2][]{\cmvFmt{#2}_{\cmvColor{#1}}}

\newcommand{\cmvC}[1][]{\cmvFmt{C_{#1}}} 
\newcommand{\cmvCi}[1][]{\cmvFmt{C'}_{#1}}

\newcommand{\cmvD}{\cmv{D}} 

\newcommand{\CmvC}[1][]{\cmv[#1]{\mathcal{C}}} 
\newcommand{\CmvCi}[1][]{\cmvFmt{\mathcal{C}'_{\cmvColor{{\rm #1}}}}}

\newcommand{\CmvD}[1][]{\cmv[#1]{\mathcal{D}}} 
\newcommand{\CmvDi}[1][]{\cmvFmt{\mathcal{D}'_{#1}}} 

\newcommand{\CmvU}[1][]{\cmvFmt{\mathbb{A}}_{\cmvColor{c}}} 

\def\cstColor{\color{MidnightBlue}}
\newcommand{\cstFmt}[1]{{\cstColor{#1}}}
\newcommand{\cst}[2][]{\cstFmt{#2}_{\cstColor{#1}}}
\newcommand{\cstC}[1][]{\cst[#1]{\Gamma}} 
\newcommand{\cstCi}[1][]{\cstFmt{\Gamma'_{\cstColor{{\rm #1}}}}}

\newcommand{\cstD}[1][]{\cst[#1]{\Delta}} 
\newcommand{\cstDi}[1][]{\cstFmt{\Delta'_{\cstColor{{\rm #1}}}}}



\def\txColor{\color{MidnightBlue}}

\newcommand{\txFmt}[1]{{\txColor{\sf #1}}}

\newcommand{\tx}[2][]{\txFmt{#2}_{\txColor{#1}}}
\newcommand{\txT}[1][]{\tx[#1]{X}} 
\newcommand{\txY}[1][]{\tx[#1]{Y}} 
\newcommand{\txTi}[1][]{\txFmt{X'_{\txColor{{\it #1}}}}}


\newcommand{\TxTS}[1][]{\vec{\txT[#1]}} 
\newcommand{\TxYS}[1][]{\vec{\txY[#1]}} 


\DeclareMathAlphabet{\mathbfsf}{\encodingdefault}{\sfdefault}{bx}{n}






\newcommand{\dom}[1]{\operatorname{dom} {#1}}

\newcommand{\Nat}{\mathbb{N}}

\newcommand{\setenum}[1]{\{#1\}}
\newcommand{\setcomp}[2]{\left\{{#1} \,\middle|\, {#2}\right\}}


\newcommand{\wmvA}[1][]{w_{#1}}

\newcommand{\WmvA}[1][]{W_{#1}}
\newcommand{\WmvAi}[1][]{W'_{#1}}



\newcommand{\waltok}[2]{#1:#2}
\newcommand{\walenum}[1]{[#1]}
\newcommand{\walpmv}[2]{{#1}\walenum{#2}}
\newcommand{\walu}[3]{\walpmv{#1}{\waltok{#2}{#3}}}

\newcommand{\waldistrarrow}[1]{\approx_{\$}}

\newcommand{\wealth}[2]{\$_{#1}\ifempty{#2}{}{({#2})}}
\newcommand{\idxfun}[1]{\mathbf{1}_{#1}}
\newcommand{\price}[1]{\$\idxfun{#1}}


\newcommand{\gain}[3]{\mathit{\gamma}_{#1}\ifempty{#2}{}{({#2},{#3})}}


\newcommand{\mall}[2]{{\kappa_{#1}\ifempty{#2}{}{({#2})}}}

\newcommand{\mev}[3]{{\mathrm{MEV\!}_{#1}\ifempty{#1#2#3}{}{({#2\ifempty{#3}{}{,#3}})}}}
\newcommand{\lmev}[3]{{\mathrm{MEV\!}_{#1}\ifempty{#1#2#3}{}{({#2\ifempty{#3}{}{,#3}})}}}


\newcommand{\qnonint}[2]{\ifempty{#1}{\mathcal{I}}{\mathcal{I}({#1} \rightsquigarrow {#2})}}


\newcommand{\intok}[2]{{\tokColor{\mathit{in}}}_{{#1}}({#2})}
\newcommand{\outtok}[2]{{\tokColor{\mathit{out}}}_{{#1}}({#2})}






\newcommand{\qedex}{\ensuremath{\diamond}}


\definecolor{LightGrey}{rgb}{0.95,0.95,0.95}
\definecolor{keyword}{HTML}{7F0055}


\def\tokColor{\color{ForestGreen}}
\newcommand{\tokFmt}[1]{{\tokColor{\tt #1}}}
\newcommand{\ETH}{\tokFmt{ETH}}
\newcommand{\tok}[2][]{\tokFmt{#2}_{\tokColor{#1}}\xspace}

\newcommand{\tokT}[1][]{\tok[{#1}]{T}}    
\newcommand{\tokTi}[1][]{\tok[{#1}]{T'}}
\newcommand{\TokU}{\tokFmt{\mathbb{T}}} 

\newcommand{\tokcT}[1][]{\contract{cT}}  
\newcommand{\tokbT}[1][]{\contract{bT}}  

\newlength\replength
\newcommand\repfrac{.1}

\setlength\replength{2.5pt}
\newcommand\rulewidth{.6pt}
\newcommand\tdashfill[1][\repfrac]{\cleaders\hbox to \replength{%
  \smash{\rule[\arraystretch\ht\strutbox]{\repfrac\replength}{\rulewidth}}}\hfill}

\newcommand\tdotfill[1][\repfrac]{\cleaders\hbox to \replength{%
  \smash{\raisebox{\arraystretch\dimexpr\ht\strutbox-.1ex\relax}{.}}}\hfill}




\def\sysColor{\color{Black}}

\newcommand{\sysFmt}[1]{{\sysColor{#1}}}
\newcommand{\sysS}[1][]{\mathord{\sysFmt{S}_{\sysColor{#1}}}}

\newcommand{\sysSi}[1][]{\mathord{\sysColor{\sysS'_{#1}}}}


\makeatletter
\newcommand{\mtmathitem}{%
\xpatchcmd{\item}{\@inmatherr\item}{\relax\ifmmode$\fi}{}{\errmessage{Patching of \noexpand\item failed}}
\xapptocmd{\@item}{$}{}{\errmessage{appending to \noexpand\@item failed}}}
\makeatother

\title{A quantitative notion of economic security for smart contract compositions}

\iftoggle{anonymous}{}
{\author{Emily Priyadarshini\inst{1} \and Massimo Bartoletti\inst{2}}
\institute{IISER Pune, India \and University of Cagliari, Italy}
}

\begin{document}

\maketitle

\begin{abstract}
Decentralized applications are often composed of multiple interconnected smart contracts.
This is especially evident in DeFi, where protocols are heavily intertwined and rely on a variety of basic building blocks such as tokens, decentralized exchanges and lending protocols.
A crucial security challenge in this setting arises when adversaries target individual components to cause systemic economic losses.
Existing security notions focus on determining the existence of these attacks, but fail to quantify the effect of manipulating individual components on the overall economic security of the system.
In this paper, we introduce a quantitative security notion that measures how an attack on a single component can amplify economic losses of the overall system.
We study the fundamental properties of this notion and apply it to assess the security of key compositions.
In particular, we analyse under-collateralized loan attacks in systems made of lending protocols and decentralized exchanges.
\end{abstract}
\section{Introduction}
\label{sec:intro}

Developing decentralized applications nowadays involves suitably designing, assembling and customizing a multitude of smart contracts, resulting in complex interactions and dependencies.
In particular, recent DeFi applications are highly interconnected compositions of smart contracts of various kinds, including tokens, derivatives, decentralized exchanges (DEX), and lending protocols~\cite{Kitzler22fc,Kitzler23tweb}.

This complexity poses significant security risks, as adversaries targeting one of the components may compromise the security of the overall application.
Note that, for this to happen, the attacked component does not even need to have a proper vulnerability to exploit. 
For example, in an application composed of a lending protocol and a DEX serving as a price oracle, adversaries could target the DEX in order to artificially inflate the price of an asset that they have previously deposited to the lending pool. This manipulation would allow adversaries to borrow other assets with an insufficient collateral, circumventing the intended economic mechanism of the lending protocol~\cite{Gudgeon2020cvcbt,Qin21fc,BCL21wtsc,Mackinga22icbc,Arora24asiaccs}.

The first step to address these risks is to formally define when a system of smart contracts is secure.
In recent years, a few security notions have emerged, starting from Babel, Daian, Kelkar and Juels' ``Clockwork finance''~\cite{Babel23clockwork}.
Broadly, these definitions try to characterise the economic security of smart contract systems based on the extent of economic damage that adversaries can inflict on them.
In this context, adversaries are typically assumed to have the powers of consensus nodes.
Namely, they can reorder, drop or insert transactions in blocks.
Accordingly, the economic damage on a system $\sysS$ can be quantified in terms of the Maximal Extractable Value (MEV) that adversaries can extract from $\sysS$ by leveraging these powers~\cite{Daian20flash}.
To provide a more concrete formulation of the existing  notions, consider a set of contracts $\cstD$ to be deployed in a system $\sysS$.
We denote by \mbox{$\sysS \mid \cstD$} the system composed of $\sysS$ and~$\cstD$.
The security criterion in~\cite{Babel23clockwork} requires that
$\mev{}{\sysS \mid \cstD}{} \leq (1+\epsilon) \, \mev{}{\sysS}{}$: namely,
the MEV extractable from \mbox{$\sysS \mid \cstD$} does not exceed the MEV extractable from $\sysS$ by more than a factor of~$\epsilon$.
This notion does not capture our intuition of assessing the security of $\cstD$ in terms of the economic losses that $\cstD$ could incur due to adversaries interacting with the context $\sysS$.  %
For example, an airdrop contract $\cstD$ that gives away tokens would be deemed insecure, while in reality its interactions with $\sysS$ are irrelevant.

In a different security setting, a similar intuition was the basis of Goguen and Meseguer' non-interference~\cite{GoguenMeseguer82sp}, which was originally formulated as follows: 
\begin{quote}
``One group of users, using a certain set of commands, is noninterfering with another group of users if what the first group does with those
commands has no effect on what the second group of users can see''.
\end{quote}

In the setting of smart contract compositions, this notion can be reinterpreted by requiring that adversaries interacting with $\sysS$ do not inflict economic damage to $\cstD$.
The notion of \emph{MEV non-interference} introduced by~\cite{BMZ24fc} is based on this idea, using MEV as a measure of economic damage. 
The approaches in~\cite{Guesmi24dlt,Yao24scif} are also based on the idea of non-interference, but replacing MEV with an explicit tagging of contract variables as high-level or low-level variables.

A common aspect of these approaches to economic non-interference is their \emph{qualitative} nature: namely, these definitions classify a composition as either secure or insecure, in a binary fashion.  
While a qualitative evaluation is sufficient when a composition is deemed secure, in that case that it is not, 
it does not provide any meaningful estimate of the \emph{degree} of interference.
For example, in the insecure composition between a lending protocol and a DEX mentioned above, a quantitative measure could provide insights into the extent to which the system state (\eg, the liquidity reserves in the DEX) and the contract parameters (\eg, the collateralization threshold) contribute to increasing the economic loss. 

\paragraph{Contributions}

This paper introduces a quantitative notion of economic security for smart contract compositions. 
Our \emph{MEV interference}, which we denote by $\qnonint{\sysS}{\cstD}$, measures the increase of economic loss of contracts $\cstD$ that adversaries can achieve by manipulating the context $\sysS$.
We apply our notion to assess the security of some notable contract compositions, including a bet on a token price, and a lending protocol relying on a DEX as a price oracle.
We prove some fundamental properties of our notion:
more specifically, $\qnonint{\sysS}{\cstD}$ increases when $\sysS$ is extended with contracts that are not in the dependencies of $\cstD$ (\Cref{prop:qnonint:larger-state});
$\qnonint{\sysS}{\cstD}$ does not depend on the token balances of users except adversaries (\Cref{prop:qnonint:adv wallets});
$\qnonint{\sysS}{\cstD}$ is preserved when extending $\sysS$ with contracts $\cstC$ that enjoy some specific independency conditions with respect to $\cstD$ (\Cref{th:qnonint:preserving-interference}).
\section{Smart contracts model}
\label{sec:model}

\begin{table}[t]
\caption{Summary of notation.}
\label{tab:notation}
\centering
\begin{tabular}{p{40pt}p{100pt}p{60pt}p{130pt}}
\hline
$\pmvA,\pmvB$ & User accounts 
& $\AddrA,\AddrB$ & Sets of [user$\mid$contract] accounts
\\
$\cmvC,\cmvD$ & Contract accounts 
& $\CmvC,\CmvD$ & Sets of contract accounts
\\
$\tokT,\tokTi$ & Token types 
& $\price{\tokT}$ & Price of $\tokT$
\\
$\txT,\txTi$ & Transaction names 
& $\pmvA:\contract{C}.\txcode{f}(\code{args})$ & Transaction
\\
$\sysS,\sysSi$ & Blockchain states 
& $\wealth{\CmvC}{\sysS}$ & Wealth of contracts $\CmvC$ in $\sysS$
\\
$\WmvA,\WmvAi$ & Wallet states 
& $\deps{\CmvC}$ & Dependencies of contracts $\CmvC$
\\
$\cstC,\cstD$ & Contract states 
& $\cmvOfcst{\cstC}$ & Contract accounts in $\cstC$
\\
\hline
\end{tabular}
\end{table}

We consider a contract model inspired by account-based platforms such as Ethereum.
The basic building blocks of our model are
a set $\TokU$ of \emph{token types} ($\tokT, \tokTi, \ldots$),
representing crypto-assets (\eg, ETH), 
and a set $\AddrU$ of \emph{accounts}.
We partition accounts into
\emph{user accounts} $\pmvA, \pmvB, \ldots \in \PmvU$
(representing the so-called \emph{externally owned accounts} in Ethereum)
and
\emph{contract accounts} $\cmvC, \cmvD, \ldots \in \CmvU$.

The state of a user account is a map \mbox{$\wmvA \in \TokU \rightarrow \Nat$}
from token types to non-negative integers, representing a \emph{wallet} of tokens.
The state of a contract account is a pair $(\wmvA,\sigma)$,
where $\wmvA$ is a wallet and $\sigma$ is a key-value map, representing the contract storage.
A \emph{blockchain state} $\sysS$
is a map from accounts to their states. 
We write an account state in square brackets, wherein we denote by $\waltok{n}{\tokT}$ a balance of $n$ units of token $\tokT$ in the wallet, and by $\code{x}=v$, the association of value $v$ to the storage variable \code{x}.
For example, 
$\walpmv{\contract{C}}{\waltok{1}{\tokT},\code{owner}=\pmvA}$
represents a state where the contract $\contract{C}$ stores 1 unit of $\tokT$, and the variable $\code{owner}$ contains the address $\pmvA$.
We write a blockchain state as the composition of its account states, using the symbol $\mid$ as a separator. 
For example,
\(
\sysS =
\walpmv{\pmvA}{\waltok{1}{\tokT},\waltok{2}{\ETH}} \mid
\walpmv{\contract{C}}{\waltok{1}{\tokT},\code{owner}=\pmvA}
\)
is a state composed by a user account and a contract account.

Contracts are made up of a finite set of \emph{functions}, which can be called by \emph{transactions} sent by users.
A function can:
\begin{inlinelist}
\item receive parameters and tokens from the caller,
\item transfer tokens to user accounts (including the caller),
\item update the contract state,  
\item call other functions (possibly of other contracts, and possibly transferring tokens along with the call),
\item return values to the caller.
\end{inlinelist}
Functions can only manipulate tokens as described above: in particular, they cannot mint or burn tokens, or drain tokens from other accounts.
Transactions $\txT, \txTi, \ldots$ are calls to contract functions,
written $\pmvA:\contract{C}.\txcode{f}(\code{args})$,
where $\pmvA$ is the user signing the transaction,
$\contract{C}$ is the called contract,
$\txcode{f}$ is the called function,
and $\code{args}$ is the list of actual parameters.
Parameters can also include transfers of tokens $\tokT$ from $\pmvA$ to $\cmvC$,
written $\pmvA\ \code{pays}\ \waltok{n}{\tokT}$.
Invalid transactions are reverted (\ie, they do not update the blockchain state).
We remark that our security definition and results do not rely on a particular language for functions:  we just assume a deterministic transition relation $\xrightarrow{}$ between blockchain states, where state transitions are triggered by transactions.
To write examples, however, we will instantiate this abstract model using a contract language inspired by Solidity.

We assume that a contract $\cmvD$ can call a function of a contract $\cmvC$ only if $\cmvC$ was deployed before $\cmvD$.
Formally, defining $\cmvC \prec \cmvD$ (read: ``\emph{$\cmvC$ is called by $\cmvD$}'') when some function in $\cmvD$
calls some function in $\cmvC$, we require that the transitive and reflexive closure $\sqsubseteq$ of $\prec$ is a partial order.
We define the \emph{dependencies} of a contract $\cmvC$ as $\deps{\cmvC} = \setcomp{\cmvCi}{\cmvCi \sqsubseteq \cmvC}$, and extend this notion to \emph{sets} of contracts $\CmvC$.
We assume that blockchain states $\sysS$ enjoy the following conditions:
\begin{inlinelist}
\item $\sysS$ contains all its dependencies, \ie if $\cmvC$ is a contract in $\sysS$, then also the contracts $\deps{\cmvC}$ are in~$\sysS$;
\item $\sysS$ contains \emph{finite tokens}. 
\end{inlinelist}
All states mentioned in our results are assumed to enjoy these well-formedness assumption.%
\footnote{Note that well-formedness rules out some problematic features like reentrancy, which instead is present in Ethereum. However, reentrancy can always be removed by using suitable programming patterns, so we do not consider this as a limitation.}
We write \mbox{$\sysS = \WmvA \mid \cstC$}
for a blockchain state $\sysS$ composed 
of user wallets $\WmvA$ and contract states $\cstC$.
We can deconstruct wallets, writing \mbox{$\sysS = \WmvA \mid \WmvAi \mid \cstC$} when the accounts in $\WmvA$ and $\WmvAi$ are disjoint, as well as contract states, writing \mbox{$\sysS = \WmvA \mid \cstC \mid \cstD$}.
We denote by $\cmvOfcst{\cstC}$ the set of contract accounts
in $\cstC$, 
\ie $\cmvOfcst{\cstC} = \dom{\cstC}$.
For example,
\(
\cmvOfcst{(\walpmv{\contract{C}}{\cdots} \mid \walpmv{\contract{D}}{\cdots})} = \setenum{\contract{C},\contract{D}}
\).
Given $\txT = \pmvA:\contract{C}.\txcode{f}(\code{args})$,
we write $\callee{\txT}$ for the target contract~$\contract{C}$.
\section{Threat model}

To define economic security of smart contract compositions, following~\cite{Babel23clockwork} we consider the Maximal Extractable Value (MEV) that can be extracted when new contracts $\CmvC$ are deployed in a blockchain state \mbox{$\sysS = \WmvA \mid \cstC$}, leading to a new state \mbox{$\sysS \mid \cstC \mid \cstD$} where $\cstD$ contains the initial state of the new contracts~$\CmvC$. 
Since our goal is measuring the loss of the new contracts $\cstD$ caused by attacking their dependencies $\cstC$, rather than considering the overall MEV of $\sysS \mid \cstD$, we isolate the MEV extractable from $\cstD$ and compare it to the MEV that could be extracted from $\cstD$ \emph{without} exploiting the dependencies $\cstC$. 
To this purpose, we leverage the adversary model and the notion of \emph{local MEV} introduced in~\cite{BMZ24fc}.

We start by designating a finite subset  $\Adv$ of user accounts as adversaries.
We assume that adversaries have full control of the selection and ordering of transactions --- a standard assumption in definitions of MEV~\cite{Babel23clockwork}.
Then, to measure the economic loss of a set of contracts $\CmvC$, we consider the wealth of $\CmvC$ in a blockchain state before and after the attack.
The wealth of $\CmvC$ in $\sysS$, written $\wealth{\CmvC}{\sysS}$, is given by the amount of tokens in each contract $\cmvC \in \CmvC$ in $\sysS$ weighted by their prices.
Recalling that a contract state is a pair $(w,\sigma)$ whose first element is a wallet, and denoting by $\price{\tokT}$ the price of a token type $\tokT$, the wealth of a single contract state $\walpmv{\cmvC}{w,\sigma}$ is given by $\sum_{\tokT} \wmvA(\tokT) \cdot \price{\tokT}$, \ie the summation, for all token types $\tokT$, of the number of tokens $\tokT$ in the wallet of $\cmvC$, times the price of $\tokT$.%
\footnote{Here we implicitly assume that the prices of \emph{native} crypto-assets are constant, 
since they do not depend on the blockchain state. We discuss this assumption in~\Cref{sec:conclusions}.}
By extending this to the set $\CmvC$, we obtain the following general definition of wealth:
\begin{equation}
    \label{eq:wealth}
    \wealth{\CmvC}{\sysS}
    \; = \;
    \sum_{\cmvC \in \CmvC, \tokT}
    \hspace{-2pt}    
    \fst(\cstC(\cmvC))(\tokT) \cdot \price{\tokT}    
\end{equation}

Building on the definition of wealth, we now revisit the notion of local MEV introduced in~\cite{BMZ24fc}.
The local MEV extractable by a set of contracts $\CmvC$ in a blockchain state $\sysS$, denoted by $\lmev{}{\sysS}{\CmvC}$, is the maximum loss that adversaries can inflict to $\CmvC$ by performing an arbitrary sequence of transactions crafted using their knowledge. 
By denoting with $\mall{}{\Adv}$ the set of transactions craftable by $\Adv$, this amounts to the maximum loss
$\wealth{\CmvC}{\sysS} - \wealth{\CmvC}{\sysSi}$
over all possible states $\sysSi$ reachable through a sequence $\TxTS$ of transactions in $\mall{}{\Adv}$. In symbols:
\begin{equation}
  \label{eq:lmev:unrestricted}
    \lmev{}{\sysS}{\CmvC}
    = \max \setcomp
    {
    \wealth{\CmvC}{\sysS} - \wealth{\CmvC}{\sysSi}
    }
    {
    \TxTS \in \mall{}{\Adv}^*, \ 
    \sysS \xrightarrow{\TxTS} \sysSi
    }
\end{equation}

In $\lmev{}{\sysS}{\CmvC}$, adversaries are allowed to call any contract in $\sysS$, including the dependencies of $\CmvC$ not defined in $\CmvC$ itself.
This follows from the fact that $\mall{}{\Adv}$ does not pose any restriction on the callee of the transactions craftable by $\Adv$.
To estimate the MEV extractable from $\cstD$ \emph{without} exploiting the dependencies $\cstC$, we introduce an additional parameter $\CmvD$ to local MEV, representing the set of contracts callable by~$\Adv$.
We denote by
\(
\mall{\CmvD}{\Adv}
= 
\setcomp{\txT \in \mall{}{\Adv}}{\callee{\txT} \in \CmvD}
\)
the set of transactions craftable by $\Adv$ 
and targeting contracts in $\CmvD$.
We define:
\begin{equation}
    \label{eq:lmev}
    \lmev{\CmvD}{\sysS}{\CmvC}
    = \max \setcomp
    {
    \wealth{\CmvC}{\sysS} - \wealth{\CmvC}{\sysSi}
    }
    {
    \TxTS \in \mall{\CmvD}{\Adv}^*, \ 
    \sysS \xrightarrow{\TxTS} \sysSi
    }
\end{equation}

Note that by the finite token assumption in~\Cref{sec:model},
the wealth is always finite, and so also the local MEV.


\section{A quantitative notion of economic security}
\label{sec:nonint}

In this section we introduce our notion of quantitative 
security for smart contract compositions, and study its theoretical properties. 
In~\Cref{sec:use-cases} we will apply it to analyse some archetypal compositions and attacks. 

Let $\sysS$ be a blockchain state, formed by users' wallets $\WmvA$ and contract states~$\cstC$,
where we want to deploy new contracts with an initial state $\cstD$.
Note that, by the well-formedness assumption introduced in~\Cref{sec:model}, the dependencies of $\cstD$ must be included in $\cstC \mid \cstD$, \ie any function call made by a contract in $\cstD$ must target some contracts in $\cstC$ or in $\cstD$.
We want to measure the security of the composition $\sysS \mid \cstD$ by analysing the additional loss that an adversary can inflict to the contracts in $\cstD$ by manipulating the dependencies $\cstC$. 
To this purpose, our definition will compare: 
\begin{itemize}

\item $\lmev{}{\sysS\mid\cstD}{\cmvOfcst{\cstD}}$, the maximal loss of the contracts in $\cstD$, where adversaries are able to send transactions to \emph{any} contract in $\sysS \mid \cstD$;

\item $\lmev{\cmvOfcst{\cstD}}{\sysS\mid\cstD}{\cmvOfcst{\cstD}}$, the maximal loss of the contracts in $\cstD$, where adversaries can \emph{only} send transactions to contracts in $\cstD$. Note that interactions between $\cstD$ and $\cstC$ are still possible, as contracts in $\cstD$ can invoke functions of contracts in $\cstC$ (\emph{``contract dependencies''}), and adversaries can extract tokens from $\cstC$ to play them in calls to contracts in $\cstD$ (\emph{``token dependencies''}).

\end{itemize}

Our security notion, called \emph{MEV interference},
measures how leveraging the dependencies in $\sysS$ can amplify the loss caused to~$\cstD$. 
We denote with \mbox{$\qnonint{\sysS}{\cstD}$}
the MEV interference caused by a blockchain state $\sysS$ to~$\cstD$.


\begin{definition}[MEV interference]
\label{def:qnonint}
    For a blockchain state $\sysS$ and a contract state $\cstD$, we quantify the $\lmev{}{}{}$ interference caused by $\sysS$ on $\cstD$ as:
    \begin{align*}
        \qnonint{\sysS}{\cstD} 
        \; = \;
        \begin{cases}
        1 - \dfrac{\lmev{\cmvOfcst{\cstD}}{\sysS\mid\cstD}{\cmvOfcst{\cstD}}}{\lmev{}{\sysS\mid\cstD}{\cmvOfcst{\cstD}}}
        & \textit{if} \hspace{3mm} \lmev{}{\sysS\mid\cstD}{\cmvOfcst{\cstD}} \neq 0
        \\
        0 & \text{otherwise}
        \end{cases}
    \end{align*}
\end{definition}

Our notion is consistent with the notion of MEV non-interference in~\cite{BMZ24fc}, which classifies $\sysS$ and $\cstD$ as non-interferent if
$\lmev{\cmvOfcst{\cstD}}{\sysS\mid\cstD}{\cmvOfcst{\cstD}} = \lmev{}{\sysS\mid\cstD}{\cmvOfcst{\cstD}}$.
Namely, \mbox{$\qnonint{\sysS}{\cstD} = 0$} iff $\sysS$ and $\cstD$ are non-interferent according to~\cite{BMZ24fc}.

\begin{figure}[t]
\begin{lstlisting}[
,language=txscript
,morekeywords={Airdrop,fund,withdraw},classoffset=4
,morekeywords={a,b,A,Oracle},keywordstyle=\addrColor,classoffset=5
,morekeywords={t,t1,t2,T,tin,tout},keywordstyle=\tokColor,classoffset=6
,caption={A simple airdrop contract.}
,label={lst:airdrop}
]
contract (*$\contract{Airdrop}$*) {
  fund(pay x:T) { } // any user can deposit x:T to the contract
  withdraw(x) {     // any user can withdraw any amount x:T 
    require(balance(T)>=x); // check that the contract has at least x:T
    transfer(sender,x:T);   // transfer x:T to the caller
  }
}
\end{lstlisting}
\end{figure}

\begin{example}[Any/Airdrop]
Consider an instance $\cstD = \walpmv{\contract{Airdrop}}{\waltok{n}{\tokT}}$ of the airdrop contract in~\Cref{lst:airdrop}, to be deployed in an arbitrary blockchain state~$\sysS$.
Note that $\lmev{}{\sysS \mid \cstD}{\setenum{\contract{Airdrop}}} = n \cdot \price{\tokT}$, since the adversary can craft a transaction $\pmvM:\contract{Airdrop}.\txcode{withdraw}(n)$ to extract all the tokens from the contract.
The restricted $\lmev{\setenum{\contract{Airdrop}}}{\sysS \mid \cstD}{\setenum{\contract{Airdrop}}}$ is equal to the unrestricted one, 
since the adversary just needs to interact with $\contract{Airdrop}$. 
Therefore, if $n>0$: 
\[
\qnonint{\sysS}{\cstD} 
= 
1 - \frac{\lmev{\setenum{\contract{Airdrop}}}{\sysS \mid \cstD}{\setenum{\contract{Airdrop}}}}{\lmev{}{\sysS \mid \cstD}{\setenum{\contract{Airdrop}}}} 
= 
0
\]
The same holds if $n=0$.
This is consistent with our intuition, since the adversary does not need to exploit the dependencies in $\sysS$ to extract MEV from $\cstD$.
\hfill\qedex
\end{example}

\begin{figure}[t]
\begin{lstlisting}[
,language=txscript
,morekeywords={Airdrop,fund,withdraw,getFee,getOwner,setFee},classoffset=4
,morekeywords={a,b,A,Oracle},keywordstyle=\addrColor,classoffset=5
,morekeywords={t,t1,t2,T,tin,tout},keywordstyle=\tokColor,classoffset=6
,morekeywords={FeeManager},keywordstyle=\cmvColor,classoffset=7
,caption={A simple airdrop contract with fees.}
,label={lst:airdrop-fee}
]
contract (*$\contract{AirdropFee}$*) {
  fund(pay x:T) { } // any user can deposit x:T to the contract
  withdraw(x) {     // any user can withdraw any amount x:T (minus fee)
    require(balance(T)>=x);
    fee = floor((FeeManager.getFee() * x) / 100); // integer division
    transfer(sender, x-fee:T);              
    transfer(FeeManager.getOwner(), fee:T);
  }
}
contract FeeManager {
  constructor() { owner=sender; feeRate=1; }
  getOwner() { return owner; }
  getFee()   { return feeRate; }
  setFee(r)  { require (r>=0 && r<=100); feeRate=r; }
}
\end{lstlisting}
\end{figure}

\begin{example}[FeeManager/Airdrop]
Consider a variant of the airdrop contract, where each withdrawal requires the user to pay a proportional fee (\Cref{lst:airdrop-fee}).
To obtain the fee rate, the $\contract{AirdropFee}$ contract calls the $\contract{FeeManager}$ contract.
Assume that we want to deploy
$\cstD = \walpmv{\contract{AirdropFee}}{\waltok{n}{\tokT}}$
in a blockchain state $\sysS$ containing 
$\walpmv{\contract{FeeManager}}{\code{feeRate}=r}$.
The unrestricted $\lmev{}{}{}$ is $n \cdot \price{\tokT}$, 
since an adversary can set the fee to $0$ by calling 
$\contract{FeeManager}.\txcode{setFee}(0)$ and then withdraw the full balance of $\waltok{n}{\tokT}$ from $\contract{AirdropFee}$.
Instead, the restricted $\lmev{}{}{}$ only amounts to $(n - \lfloor \nicefrac{r \cdot n}{100} \rfloor) \cdot \price{\tokT}$,
since the adversary cannot call $\contract{FeeManager}$ to manipulate the fee rate.
Therefore, if $n>0$:
\[
\qnonint{\sysS}{\cstD} 
= 
1 - \frac{n - \lfloor \nicefrac{r \cdot n}{100} \rfloor}{n}
\leq
\frac{r}{100}
\]
This is coherent with our intuition: the closer the fee rate is to 100, the greater the difference between restricted and unrestricted MEV, and so the possibility for the attacker to inflict more damage to the contract.
\hfill\qedex
\end{example}


We now study the theoretical properties of MEV interference.
Because of space constraints, we relegate the proofs of our statements to a technical report on ArXiV.
\Cref{lem:qnonint:basic} establishes a few basic properties of MEV interference:
its value is zero when the context $\sysS$ has no contracts and when $\cstD$ is empty; furthermore, the interference is always comprised between 0 and 1.

\begin{lemma} 
\label{lem:qnonint:basic}
\begin{inlinelist}
\item \label{lem:qnonint:basic:1}
$\qnonint{\sysS}{\emptyset} = 0$;

\item \label{lem:qnonint:basic:2}
$\qnonint{\WmvA \mid \emptyset}{\cstD} = 0$;

\item \label{lem:qnonint:basic:3}
$0 \leq \qnonint{\sysS}{\cstD} \leq 1$.
\end{inlinelist}
\end{lemma}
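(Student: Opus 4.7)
The plan is to dispatch the three items separately, using only the definition of $\qnonint{\sysS}{\cstD}$ together with the shape of $\lmev{\CmvD}{\sysS}{\CmvC}$ from~\eqref{eq:lmev}.

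For item~\ref{lem:qnonint:basic:1}, I would observe that $\cstD = \emptyset$ forces $\cmvOfcst{\cstD} = \emptyset$, so the wealth sum in~\eqref{eq:wealth} ranges over an empty index set and yields $\wealth{\emptyset}{\sysS'} = 0$ for every reachable state $\sysS'$. Thus $\lmev{}{\sysS \mid \emptyset}{\emptyset} = 0$, the definition of $\qnonint{}{}$ falls into the ``otherwise'' branch, and the result is immediate.

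For item~\ref{lem:qnonint:basic:2}, the key remark is that well-formedness of $\WmvA \mid \cstD$ forces $\deps{\cmvOfcst{\cstD}} \subseteq \cmvOfcst{\cstD}$, since no contract accounts live in~$\WmvA$. I would then argue, using the paper's stipulation that ``invalid transactions are reverted'', that any transaction in $\mall{}{\Adv} \setminus \mall{\cmvOfcst{\cstD}}{\Adv}$ targets a callee not present in $\WmvA \mid \cstD$ and therefore cannot alter the state. Consequently the set of reachable states coincides under the restricted and unrestricted adversary, so the two MEVs are equal and the fraction in~\Cref{def:qnonint} is $1$, giving $\qnonint{\WmvA \mid \emptyset}{\cstD} = 0$.

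For item~\ref{lem:qnonint:basic:3}, I would split on whether $\lmev{}{\sysS\mid\cstD}{\cmvOfcst{\cstD}}$ is zero; the ``otherwise'' branch makes the bounds trivial. In the nontrivial branch, I would establish two inequalities. First, $\mall{\cmvOfcst{\cstD}}{\Adv} \subseteq \mall{}{\Adv}$, so the set over which the maximum in~\eqref{eq:lmev} is taken only grows when we drop the callee restriction; hence
\[
0 \;\leq\; \lmev{\cmvOfcst{\cstD}}{\sysS\mid\cstD}{\cmvOfcst{\cstD}} \;\leq\; \lmev{}{\sysS\mid\cstD}{\cmvOfcst{\cstD}},
\]
where the left inequality uses that the empty sequence $\emptyseq \in \mall{\cmvOfcst{\cstD}}{\Adv}^*$ gives $\sysS \mid \cstD \xrightarrow{\emptyseq} \sysS \mid \cstD$, so the maximum in~\eqref{eq:lmev} is taken over a set containing the value $0$. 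Dividing and subtracting from $1$ as in~\Cref{def:qnonint} yields $0 \leq \qnonint{\sysS}{\cstD} \leq 1$.

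The only subtle step is item~\ref{lem:qnonint:basic:2}: it hinges on the revert semantics of calls to non-existent contracts, which is stated informally in~\Cref{sec:model} but must be invoked carefully to conclude that restricting callees to $\cmvOfcst{\cstD}$ does not shrink the reachability relation in this particular context. The other two items are essentially unpackings of the definitions.
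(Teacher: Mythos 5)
Your proof is correct and follows essentially the same route as the paper's: the same three-way decomposition, the same case split on a zero denominator in item (iii), and the same three underlying facts (zero wealth of an empty target set, irrelevance of callees absent from the state, and non-negativity plus monotonicity of local MEV in the callee set). The only difference is that the paper obtains these facts by citing the pre-established properties in \Cref{lem:lmev} (imported from~\cite{BMZ24fc}), whereas you re-derive them inline from \eqref{eq:wealth}, \eqref{eq:lmev} and the revert semantics; note only that in item (ii) the conclusion in the zero-denominator case comes from the ``otherwise'' branch of \Cref{def:qnonint} rather than from the fraction equalling $1$.
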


Note that $\qnonint{\sysS}{\cstD}$ ranges from a minumum $0$, representing the case where the context $\sysS$ is not useful to extract MEV from $\cstD$, 
to a maximum $1$, corresponding to the case where 
the economic loss that can be inflicted to $\cstD$ is purely due to the interactions of the adversary with~$\sysS$.
Enclosing MEV interference into an interval is a design choice, which we illustrate with an example.
Let $\sysS$ be a state with an airdrop contract releasing $1:\tokT$, where we want to deploy a new contract $\cstD$ that, upon the payment of $\waltok{1}{\tokT}$, releases all its balance of $\waltok{n}{\ETH}$.
Assume that the adversary has no tokens $\tokT$, so that she needs to extract $\waltok{1}{\tokT}$ from the airdrop in order to extract MEV from $\cstD$.
If we measured the interference from $\sysS$ to $\cstD$ as the difference between unrestricted and restricted MEV, \ie:
\[
\qnonint{\sysS}{\cstD}
\; \stackrel{?}{=} \;
\lmev{}{\sysS\mid\cstD}{\cmvOfcst{\cstD}}
-
\lmev{\cmvOfcst{\cstD}}{\sysS\mid\cstD}{\cmvOfcst{\cstD}}
\]
then we would obtain that
$\qnonint{\sysS}{\cstD} = n \cdot \price{\ETH}$,
\ie the interference would be proportional to the $\ETH$ balance in $\cstD$.
We do not find this measure particularly insightful: 
after all, what we observe is just that \emph{all} the MEV extractable from $\cstD$ is due to the interaction with the context $\sysS$.
In general, under these conditions, our intuition is that the interference should take its maximum value. 

\medskip
\Cref{lem:qnonint:zero-wealth} 
states that when the newly deployed contracts $\cstD$ have no wealth (\ie, when $\wealth{\cmvOfcst{\cstD}}{\cstD} = 0$), then they have no MEV interference with the context.

\begin{lemma}
\label{lem:qnonint:zero-wealth}
If $\wealth{\cmvOfcst{\cstD}}{\cstD} = 0$, then $\qnonint{\sysS}{\cstD} = 0$.
\end{lemma}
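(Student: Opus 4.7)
The strategy is to show that under the hypothesis $\wealth{\cmvOfcst{\cstD}}{\cstD} = 0$ the \emph{unrestricted} local MEV $\lmev{}{\sysS\mid\cstD}{\cmvOfcst{\cstD}}$ equals zero, which triggers the ``otherwise'' branch of \Cref{def:qnonint} and forces $\qnonint{\sysS}{\cstD} = 0$.

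First, I would observe that the wealth of $\cmvOfcst{\cstD}$ depends only on the contract states of accounts in $\cmvOfcst{\cstD}$. By~\Cref{eq:wealth}, the sum runs over $\cmvC \in \cmvOfcst{\cstD}$ and looks up $\cmvC$ in the ambient contract state. Since these accounts are precisely the ones defined in $\cstD$, we have $\wealth{\cmvOfcst{\cstD}}{\sysS\mid\cstD} = \wealth{\cmvOfcst{\cstD}}{\cstD} = 0$ by hypothesis.

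Next, I would note that wealth is always non-negative: the sum in~\Cref{eq:wealth} ranges over non-negative token balances weighted by non-negative prices. Hence for every reachable state $\sysSi$ with $\sysS\mid\cstD \xrightarrow{\TxTS} \sysSi$ we have $\wealth{\cmvOfcst{\cstD}}{\sysSi} \geq 0$, and therefore
\[
\wealth{\cmvOfcst{\cstD}}{\sysS\mid\cstD} - \wealth{\cmvOfcst{\cstD}}{\sysSi} \;=\; 0 - \wealth{\cmvOfcst{\cstD}}{\sysSi} \;\leq\; 0.
\]
On the other hand, the empty sequence $\TxTS = \emptyseq$ yields $\sysSi = \sysS\mid\cstD$, giving a difference of $0$ that is in fact attained. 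By~\Cref{eq:lmev:unrestricted} this maximum is exactly $\lmev{}{\sysS\mid\cstD}{\cmvOfcst{\cstD}} = 0$.

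Finally, since the denominator in~\Cref{def:qnonint} vanishes, the definition directly gives $\qnonint{\sysS}{\cstD} = 0$. The whole argument is a short calculation; the only conceptual step is recognising that the wealth of $\cmvOfcst{\cstD}$ in $\sysS\mid\cstD$ collapses to the wealth computed in $\cstD$ alone, which follows immediately from the summation in~\Cref{eq:wealth} being indexed over contracts in $\cmvOfcst{\cstD}$. There is no real obstacle here.
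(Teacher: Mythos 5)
Your proof is correct and follows essentially the same route as the paper's: both arguments reduce to showing that the unrestricted local MEV $\lmev{}{\sysS\mid\cstD}{\cmvOfcst{\cstD}}$ is squeezed between $0$ and the (zero) wealth of $\cmvOfcst{\cstD}$, so that the ``otherwise'' branch of \Cref{def:qnonint} applies. The only difference is cosmetic: the paper cites the pre-established bounds of \Cref{lem:lmev} (non-negativity and $\lmev{}{}{} \leq \wealth{}{}$), whereas you re-derive them directly from \Cref{eq:wealth} and \Cref{eq:lmev:unrestricted}, making explicit the step $\wealth{\cmvOfcst{\cstD}}{\sysS\mid\cstD} = \wealth{\cmvOfcst{\cstD}}{\cstD}$ that the paper leaves implicit.
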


Of course, if $\cstD$ has zero wealth, no loss can be inflicted to $\cstD$, regardless of any potential  manipulation of its dependencies in $\sysS$.
This also underscores a fundamental aspect of our definition --- namely, that it measures what happens in \emph{specific} contract states, rather than in \emph{arbitrary} reachable states of a given contract. 
For this reason, our intuition is to have \mbox{$\qnonint{\sysS}{\cstD} = 0$} whenever $\cstD$ has zero wealth, while not ruling out the possibility of having $\qnonint{\sysSi}{\cstDi} > 0$ in a state $\cstDi$ where the contracts have been funded.

\medskip
\Cref{prop:qnonint:larger-state} says that widening a blockchain state $\sysS$ potentially increases MEV interference to newly deployed contracts $\cstD$.
Formally, this amounts to showing that $\qnonint{}{}$ is monotonic
\wrt the operation of adding contracts $\cstC$ to the context,
\ie \mbox{$\qnonint{\sysS}{\cstD} \leq \qnonint{\sysS \mid \cstC}{\cstD}$}.
Note that by the well-formedness assumption, the statement implicitly assumes that $\cstD$ has no dependencies in $\cstC$.

\begin{theorem}
\label{prop:qnonint:larger-state}
\(
\qnonint{\sysS}{\cstD} \leq \qnonint{\sysS \mid \cstC}{\cstD}
\)
\end{theorem}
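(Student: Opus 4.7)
The plan is to unfold~\Cref{def:qnonint} and compare the relevant local MEV quantities in the two contexts. Let me set
$U_1 = \lmev{}{\sysS\mid\cstD}{\cmvOfcst{\cstD}}$,
$R_1 = \lmev{\cmvOfcst{\cstD}}{\sysS\mid\cstD}{\cmvOfcst{\cstD}}$,
and define $U_2$, $R_2$ analogously for the larger context $\sysS \mid \cstC \mid \cstD$. In the non-degenerate case where $U_1, U_2 > 0$, the target inequality $1 - R_1/U_1 \leq 1 - R_2/U_2$ reduces to $R_2/U_2 \leq R_1/U_1$; the degenerate cases will be handled separately at the end.

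First, I would show $U_2 \geq U_1$. Any adversarial transaction sequence $\TxTS \in \mall{}{\Adv}^*$ applicable from $\sysS\mid\cstD$ remains applicable from $\sysS\mid\cstC\mid\cstD$, because the extra contracts $\cstC$ are simply additional accounts that the sequence never touches. Since $\wealth{\cmvOfcst{\cstD}}{\cdot}$ only depends on the wallets of contracts in $\cmvOfcst{\cstD}$, the same loss is induced in both contexts, and hence the supremum in~\eqref{eq:lmev} for $U_2$ ranges over a superset of the candidates for $U_1$.

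Second, I would show $R_2 = R_1$. By the well-formedness assumption recalled just before the statement, $\cstD$ has no dependencies in $\cstC$, so every transaction $\txT \in \mall{\cmvOfcst{\cstD}}{\Adv}$ targets a contract in $\cmvOfcst{\cstD}$ whose execution cannot reach any account in $\cstC$. A short induction on the length of $\TxTS$, using determinism of $\xrightarrow{}$ and the fact (from~\Cref{sec:model}) that functions cannot drain or mint tokens from accounts they do not call, shows that restricted executions in the two contexts produce identical wallets on $\cmvOfcst{\cstD}$, whence $R_2 = R_1$.

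Combining the two steps, in the non-degenerate case we have $R_2/U_2 = R_1/U_2 \leq R_1/U_1$ since $R_1 \geq 0$ and $U_1 \leq U_2$. For the degenerate case $U_1 = 0$, the left-hand side is $0$ by definition, and the right-hand side is nonnegative by~\Cref{lem:qnonint:basic}, so the inequality holds. The case $U_2 = 0$ forces $U_1 = 0$ (since $U_1 \leq U_2$), reducing to the previous case. The main obstacle I anticipate is making the ``identical executions'' claim in Step 2 fully rigorous: one has to carefully propagate the invariant that, along any restricted run, the projections of the two states onto $\sysS \mid \cstD$ agree, which relies on the deterministic semantics together with the dependency-freeness of $\cstD$ from $\cstC$.
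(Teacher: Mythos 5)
Your proposal is correct and follows essentially the same route as the paper: the paper likewise reduces the claim to the monotonicity $\lmev{}{\sysS\mid\cstD}{\cmvOfcst{\cstD}} \leq \lmev{}{\sysS\mid\cstC\mid\cstD}{\cmvOfcst{\cstD}}$ (its \Cref{lem:lmev}\eqref{lem:lmev:monotonicity}) together with the equality of the restricted quantities (its \Cref{lem:lmev:prepending-state}, proved by exactly the ``restricted runs cannot touch $\cstC$'' argument you sketch), and then the same ratio manipulation and the same treatment of the degenerate case $\lmev{}{\sysS\mid\cstD}{\cmvOfcst{\cstD}} = 0$. The only difference is presentational: the paper packages your Steps 1 and 2 as standalone lemmas rather than inlining them.
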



For illustration, consider a state $\sysS$ where we want to deploy new contracts $\cstD$, with an interference estimated as $\qnonint{\sysS}{\cstD}$.
Assume now that the deployment of $\cstD$ is front-run by that of another set of contracts $\cstC$.
Of course $\cstD$ cannot have dependencies in $\cstC$, since otherwise it would not be possible to deploy $\cstD$ in $\sysS$ (as this would violate the well-formedness assumption).
Now, the interference $\qnonint{\sysS \mid \cstC}{\cstD}$
could either be equal to $\qnonint{\sysS}{\cstD}$, or possibly increase when the adversary can drain tokens from $\cstC$ to inflict more loss to~$\cstD$.
\Cref{prop:qnonint:larger-state} states that, in any case, the interference should not decrease.

The following example shows a case where the inequality given by~\Cref{prop:qnonint:larger-state} is strict.
This is because, even if $\cstD$ has no \emph{contract} dependencies in $\cstC$, the adversary may exploit their \emph{token} dependencies, \ie extract tokens from $\cstC$ and leverage them to extract more tokens from~$\cstD$. 

\begin{example}
\label{ex:qnonint-larger state}
Let $\sysS = \walu{\pmvM}{0}{\tokT}$ be a state where the adversary has no tokens, and there are no contracts.
Consider a contract $\contract{Doubler}$ with a function that, upon receiving as input $n:\tokT$, returns to the sender $2n:\tokT$, and let
$\cstD = \walpmv{\contract{Doubler}}{\waltok{2}{\tokT}}$.
%
By~\Cref{lem:qnonint:basic}, $\sysS$ does not interfere with $\cstD$. 
Instead, adding $\cstC = \walpmv{\contract{Airdrop}}{\waltok{1}{\tokT}}$
to $\sysS$ yields $\qnonint{\sysS \mid \cstC}{\cstD} = 1$,
since $\lmev{\setenum{\contract{Doubler}}}{\sysS \mid \cstC \mid \cstD}{\setenum{\contract{Doubler}}} = 0$ while
$\lmev{}{\sysS \mid \cstC \mid \cstD}{\setenum{\contract{Doubler}}} = 2 \cdot \price{\tokT}$.
This increase is caused by the ability of $\pmvM$ to leverage the token dependencies between the newly deployed $\contract{Airdrop}$ contract to extract more MEV from $\contract{Doubler}$ than previously possible.
\hfill\qedex
\end{example}

The previous example also shows that wealthier adversaries not always cause greater interference.
Indeed, if $\sysS = \walu{\pmvM}{1}{\tokT}$, then $\pmvM$ does not need to exploit the $\contract{Airdrop}$ to extract MEV from the $\contract{Doubler}$ contract, since she has enough tokens in her wallet. 
Of course there are also cases where wealthier adversary can cause more MEV interference: we will see this in~\Cref{ex:amm-bet}, where a sufficiently wealthy $\pmvM$ can win a bet by producing a price fluctuation in an AMM.
%

\medskip
\Cref{prop:qnonint:adv wallets} shows that users' wallets are irrelevant to the evaluation of MEV interference. 
Namely, $\qnonint{\sysS}{\cstD}$ is preserved when removing from $\sysS$ all the wallets except those of adversaries. 
Recall that a wallet state $\WmvA$ is a map from accounts to wallets.
Then, in a state $\sysS = \WmvA \mid \cstC$, we just need to consider the restriction of $\WmvA$ to the domain $\Adv$.

\begin{theorem}
\label{prop:qnonint:adv wallets}
If $\dom{\WmvA[\Adv]} = \Adv$, then
$\qnonint{\WmvA[\Adv] \mid \WmvA \mid \cstC}{\cstD} = \qnonint{\WmvA[\Adv] \mid \cstC}{\cstD}$.
\end{theorem}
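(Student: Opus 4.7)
The plan is to establish that both quantities appearing in~\Cref{def:qnonint} are invariant under the removal of the non-adversary wallets $\WmvA$. That is, for any choice of restriction set, including both the unrestricted case and $\CmvD = \cmvOfcst{\cstD}$, I aim to show
\[
\lmev{\CmvD}{\WmvA[\Adv] \mid \WmvA \mid \cstC \mid \cstD}{\cmvOfcst{\cstD}} \;=\; \lmev{\CmvD}{\WmvA[\Adv] \mid \cstC \mid \cstD}{\cmvOfcst{\cstD}}.
\]
The desired equality of MEV interferences then follows directly from~\Cref{def:qnonint}: if both unrestricted values vanish, both ratios are declared $0$ by the second branch; otherwise, the two ratios share identical numerators and denominators.

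The core step is a simulation lemma, proved by induction on the length of the transaction sequence. For any $\TxTS \in \mall{\CmvD}{\Adv}^*$, I would show that
\[
\WmvA[\Adv] \mid \WmvA \mid \cstC \mid \cstD \xrightarrow{\TxTS} \WmvAi[\Adv] \mid \WmvAi \mid \cstCi \mid \cstDi
\]
holds if and only if
\[
\WmvA[\Adv] \mid \cstC \mid \cstD \xrightarrow{\TxTS} \WmvAi[\Adv] \mid \cstCi \mid \cstDi,
\]
with the same resulting contract states $\cstCi, \cstDi$ in the two cases. The inductive step exploits three facts about each transaction $\txT$ signed by some $\pmvM \in \Adv$: (i) $\txT$ may transfer tokens only out of $\pmvM$'s wallet, so it cannot debit any account in $\dom{\WmvA}$; (ii) the function invoked by $\txT$ reads only the contract storage, the contract wallets and the caller's pay arguments, hence its control flow does not depend on $\WmvA$; and (iii) any transfer whose target is an account in $\dom{\WmvA}$ merely updates $\WmvA$ without affecting the contract state. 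By~\Cref{eq:wealth}, $\wealth{\cmvOfcst{\cstD}}{\cdot}$ is a function solely of the wallets of the contracts in $\cmvOfcst{\cstD}$, so identical contract-state evolutions yield identical wealth differences, and hence identical maxima in~\Cref{eq:lmev}.

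The main obstacle is item (ii): the paper leaves the transition relation $\xrightarrow{}$ deliberately abstract, so the independence of contract execution from non-adversary user wallets must be invoked as a structural assumption on the semantics rather than derived. This assumption is consistent with the informal contract language sketched in~\Cref{sec:model}, whose functions manipulate only the sender's pay, contract wallets and contract storage, but it has to be made explicit to carry out the induction. Once this is in place, the remaining argument is routine book-keeping to track how the wallet portion splits across the two simulated runs, and the conclusion follows without further difficulty.
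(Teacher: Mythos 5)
Your proof is correct and its top-level structure coincides with the paper's: both arguments reduce the claim to showing that the two local MEV quantities in \Cref{def:qnonint} --- the unrestricted one and the one restricted to $\cmvOfcst{\cstD}$ --- are unchanged when the non-adversary wallets $\WmvA$ are dropped, and then conclude by the case analysis in the definition (your observation that the two branches are selected consistently, because the denominators are equal, is the right closing step). The difference lies in how that invariance is obtained. The paper dispatches it in one line by instantiating \Cref{lem:lmev-wallet}, a property of local MEV imported as a black box from~\cite{BMZ24fc}, with $\cstCi = \cstC \mid \cstD$ and target set $\cmvOfcst{\cstD}$. You instead re-derive that lemma from first principles via an inductive simulation between the two runs. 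Your version is more self-contained, but, as you yourself observe, it cannot be completed within the paper's deliberately abstract transition relation without postulating that contract execution is insensitive to the wallets of users outside $\Adv$; this is precisely the semantic content that \Cref{lem:lmev-wallet} packages up, so your structural assumption is the honest price of not citing it rather than a gap. The one point worth tightening in your simulation is the treatment of transfers whose recipient lies in $\dom{\WmvA}$: in the reduced state those accounts are absent, so the semantics must either keep such accounts implicitly at zero balance or create them on first transfer; in either case the contract-side wallets evolve identically, which is all that \Cref{eq:wealth} requires, but the iff-form of your simulation claim should be stated up to this adjustment of the user-wallet component.
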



Here the intuition is that the adversary does not have any control of the tokens in users' wallets, and therefore these tokens play no role in the extraction of MEV from $\cstD$. 
This assumption highlights a simplification in our attacker model, namely that the mempool of users' transactions is not known by the adversary. 
Formally, this assumption is visible in the definition of MEV in~\eqref{eq:lmev}, where the set $\mall{\CmvD}{\Adv}$ of transactions craftable by the adversary does not take the mempool as a parameter.
Were mempool transactions playable by the adversary, then their success would also depend on the users' wallet, and consequently the MEV interference would possibly depend on them. 
We discuss this in~\Cref{sec:conclusions}.

\medskip
\Cref{th:qnonint:preserving-interference} provides sufficient conditions under which an adversary $\Adv$ gains no advantage by front-running the newly deployed contracts~$\cstD$ with malicious contracts $\cstC[\Adv]$.
Condition~\ref{condition:qnonint:preserving-interference:1} requires the contracts in $\deps{\cstD}$ to be \emph{sender-agnostic}, \ie their functions are unaware of the identity of the sender, only being able to use it as a recipient of token transfers.
Condition~\ref{condition:qnonint:preserving-interference:2} requires  that the contracts in $\deps{\cstD}$ are \emph{token independent} with those in the other contracts (not in $\deps{\cstD}$) which could be possibly exploited by $\Adv$.
Note that since~\Cref{def:qnonint} assumes that states are well-formed, \Cref{th:qnonint:preserving-interference} implicitly assumes that contracts in $\cstD$ do not call contracts in $\cstC[\Adv]$.
Before stating \Cref{th:qnonint:preserving-interference},
we formalise 
sender-agnosticism and token independence.

\begin{definition}[Sender-agnosticism]
A contract $\contract{C}$ is \emph{sender-agnostic} if, for all states $\sysS$ and for all transitions that involve an (external or internal) call
to $\contract{C}$, 
replacing the caller's address $\addr{a}$ with any other address $\addr{b}$ results in the same post-transition state, up to the substitution of $\addr{a}$ with $\addr{b}$.
\end{definition}

In practice, the effect of calling a function of a sender-agnostic contract $\contract{C}$ can be decomposed into:
\begin{inlinelist}

\item updating the states of contracts (either directly or through internal calls);

\item transferring tokens between users and contracts;

\item transferring tokens to the $\sender$ of the call to $\contract{C}$.

\end{inlinelist}
Any call to $\contract{C}$ with the same arguments and $\origin$, but distinct $\sender$, has exactly the same effect, except for item (iii), where tokens are transferred to the new sender.

Token independence relies on two auxiliary notions: the token types that can be received by contracts $\cstC$ from other contracts in $\sysS$, denoted by $\intok{\sysS}{\cstC}$, and those that can be sent from $\cstC$ to other contracts, written~$\outtok{\sysS}{\cstC}$.

\begin{definition}[Token independence]
\label{def:token-independence}
Let $\sysS = \WmvA \mid \cstC$, and let $\cstD \preceq \cstC$ be a subset of the contract states in $\cstC$.
We define: 
\begin{itemize}

\item $\intok{\sysS}{\cstD}$ as the set of token types $\tokT$ for which
there exists a state $\sysSi$ reachable from $\sysS$ through a sequence of steps, containing a transaction that causes an inflow of tokens $\tokT$ from outside $\cstD$ to one of the contracts in $\cstD$.

\item $\outtok{\sysS}{\cstD}$ as the set of token types $\tokT$ for which there exists a state $\sysSi$ reachable from $\sysS$ through a sequence of steps, containing a transaction that causes an outflow of tokens $\tokT$ from $\cstD$ to one of the contracts outside $\cstD$.

\end{itemize}
    
\noindent
Let now $\cstD[0] \preceq \cstC$ and $\cstD[1] \preceq \cstC$.
We say that $\cstD[0]$ and $\cstD[1]$ are \emph{token independent} in $\sysS = \WmvA \mid \cstC$ when $\intok{\sysS}{\cstD[0]} \cap \outtok{\sysS}{\cstD[1]} = \emptyset = \intok{\sysS}{\cstD[1]} \cap \outtok{\sysS}{\cstD[0]}$.
\end{definition}

\begin{theorem}
\label{th:qnonint:preserving-interference}
\mbox{$\qnonint{\sysS}{\cstD} = \qnonint{\sysS \mid \cstC[\Adv]}{\cstD}$}
holds if
\begin{inlinelist}

\item \label{condition:qnonint:preserving-interference:1}
the contracts in $\deps{\cstD}$ are sender-agnostic, and

\item \label{condition:qnonint:preserving-interference:2}
$\deps{\cstD}$ and
$\deps{\sysS \mid \cstC[\Adv]} \setminus \deps{\cstD}$
are token independent in $\sysS \mid \cstC[\Adv] \mid \cstD$.

\end{inlinelist}
\end{theorem}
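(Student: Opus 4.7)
The plan is to prove the inequality in both directions. One direction, namely $\qnonint{\sysS}{\cstD} \leq \qnonint{\sysS \mid \cstC[\Adv]}{\cstD}$, is immediate from the monotonicity stated in~\Cref{prop:qnonint:larger-state}. So the real work is in establishing the opposite inequality $\qnonint{\sysS \mid \cstC[\Adv]}{\cstD} \leq \qnonint{\sysS}{\cstD}$, which I plan to obtain by separately matching the numerator and denominator of~\Cref{def:qnonint}. That is, I will show
\[
\lmev{\cmvOfcst{\cstD}}{\sysS\mid\cstC[\Adv]\mid\cstD}{\cmvOfcst{\cstD}} = \lmev{\cmvOfcst{\cstD}}{\sysS\mid\cstD}{\cmvOfcst{\cstD}}
\quad\text{and}\quad
\lmev{}{\sysS\mid\cstC[\Adv]\mid\cstD}{\cmvOfcst{\cstD}} = \lmev{}{\sysS\mid\cstD}{\cmvOfcst{\cstD}}
\]
from which the conclusion follows directly. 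The initial wealths match since $\cstC[\Adv]$ does not sit in $\cmvOfcst{\cstD}$, so the question reduces to matching the set of achievable post-transition wealths of $\cmvOfcst{\cstD}$.

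The equality of the restricted MEVs is the easier half. Any transaction in $\mall{\cmvOfcst{\cstD}}{\Adv}$ targets a contract in $\cstD$, and by the well-formedness assumption its execution trace can only touch contracts in $\deps{\cstD}$, which by hypothesis are disjoint from $\cmvOfcst{\cstC[\Adv]}$. Hence $\cstC[\Adv]$ is inert along any $\TxTS \in \mall{\cmvOfcst{\cstD}}{\Adv}^*$, and the reachable post-states projected on $\cstD$ are the same in $\sysS \mid \cstD$ and $\sysS \mid \cstC[\Adv] \mid \cstD$.

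The equality of the unrestricted MEVs is the crux. Given a sequence $\TxTS$ over $\mall{}{\Adv}$ witnessing the max in $\sysS \mid \cstC[\Adv] \mid \cstD$, I will construct a sequence $\TxTS'$ over $\mall{}{\Adv}$ that avoids $\cstC[\Adv]$ entirely and produces the same final wealth of $\cmvOfcst{\cstD}$. I will proceed by induction on $\TxTS$, maintaining the invariant that, after processing the first $k$ transactions of $\TxTS$ with $\TxTS'_{\leq k}$ in the smaller state, the states of $\deps{\cstD}$ agree, and the adversaries' wallets in $\sysS$ differ only by tokens whose types lie outside $\intok{}{\deps{\cstD}}$. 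For a transaction $\txT$ in $\TxTS$ whose callee is not in $\deps{\cstC[\Adv]}$, I keep it unchanged; for a transaction targeting a contract $\cmvC \in \deps{\cstC[\Adv]}$, I replace its effect by the sub-sequence of direct external calls to $\deps{\cstD}$ that the execution of $\txT$ would trigger. Condition~\ref{condition:qnonint:preserving-interference:1} (sender-agnosticism of $\deps{\cstD}$) ensures that substituting the original caller (either $\Adv$ or some contract of $\cstC[\Adv]$) by $\Adv$ in each such call preserves the effect on $\deps{\cstD}$ up to redirecting any out-transfers to the sender, \ie to $\Adv$. Condition~\ref{condition:qnonint:preserving-interference:2} (token independence) guarantees that no token flow between $\cstC[\Adv]$ (more generally, $\deps{\sysS \mid \cstC[\Adv]} \setminus \deps{\cstD}$) and $\deps{\cstD}$ is ever needed to carry out these calls, so no ``missing'' token balance in $\cstC[\Adv]$ can block the simulation, and no token transfer from $\deps{\cstD}$ into $\cstC[\Adv]$ was producing a loss we fail to reproduce. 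The invariant then implies that the final wealth of $\cmvOfcst{\cstD}$ coincides.

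The main obstacle I anticipate is making the above simulation argument fully precise, in particular dealing with nested internal calls in which $\cstC[\Adv]$-contracts and $\deps{\cstD}$-contracts alternate as callees: one must carefully unfold the call tree, extract the maximal sub-calls that land in $\deps{\cstD}$, and argue by sender-agnosticism that flattening them into externally-initiated calls by $\Adv$ yields the same trace on $\deps{\cstD}$. Token independence is used at each step to justify that the argument token lists of such calls are unchanged, and that no return-value token flow across the $\deps{\cstD}$ boundary is lost. Once this simulation lemma is in place, the two MEV equalities, and hence the theorem, follow.
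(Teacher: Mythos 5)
Your proposal matches the paper's proof in all essentials: the same reduction to the two equalities of restricted and unrestricted local MEV, the same inertness argument for the restricted case (which the paper packages as \Cref{lem:lmev:prepending-state}), and the same call-flattening simulation for the unrestricted case, using sender-agnosticism to re-attribute internal calls into $\deps{\cstD}$ to the adversary and token independence to guarantee that the flattened calls need no funding across the $\deps{\cstD}$ boundary (which the paper packages as the contract-stripping \Cref{th:lmev:contract-stripping}, applied twice and combined with \Cref{lem:lmev}). The difference is purely organizational, so this is essentially the paper's argument.
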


Note that $\qnonint{\sysS}{\cstD}$ is zero when the contract dependencies and the token dependencies of $\cstD$ in $\sysS$ are irrelevant to the ability of inflicting a loss to $\cstD$.
\Eg, consider an arbitrary state $\sysS$ where we want to deploy an airdrop contract $\cstD$ (see~\Cref{lst:airdrop}). 
In this scenario, the adversary cannot gain any advantage from the contracts in $\sysS$, since she can extract the full MEV from the airdrop by interacting with $\cstD$, only. 
Therefore, the MEV interference from $\sysS$ to $\cstD$ is zero. 
\section{Use cases}
\label{sec:use-cases}

We now illustrate MEV interference through a set of use cases.
For simplicity, we assume the values in these use cases are real numbers and that all computations are performed using exact arithmetic.
We note that adapting our results to smart contract platforms, that, like Ethereum, operate on integers, requires several modifications, such as applying flooring to arithmetic operations and replacing equalities with inequalities.
We refer to \iftoggle{arxiv}{\Cref{sec:proofs:use-cases}}{the full technical report on ArXiV} for details.

\begin{figure}[t]
\begin{lstlisting}[
  ,language=txscript
  ,morekeywords={Exchange,swap,getRate,setRate,getTokens},classoffset=4
  ,morekeywords={a,b,A,Oracle},keywordstyle=\pmvColor,classoffset=4
  ,morekeywords={t,t1,t2,T,tin,tout},keywordstyle=\tokColor
  ,caption={An exchange contract.}
  ,label={lst:exchange}
]
contract (*$\contract{Exchange}$*) {
  constructor(pay x:tout_, tin_, rate_) { // receive x:tout_ from sender
    require rate_>0 && tin_!=tout_; 
    rate=rate_; tout=tout_; tin=tin_; owner=sender; 
  }
  getTokens() { return (tin,tout); }
  getRate(tout) { return rate; } // 1:tin for getRate(tout):tout
  setRate(r) { require sender==owner; rate=r; } // sender can update rate 
  
  swap(pay x:tin) {           // sender sells x units of token tin
    y = x*getRate(tout);      // units of token tout sold to sender  
    require balance(tout)>=y; // Exchange has enough tout tokens
    transfer(sender, y:tout); // send y units of token tout to sender 
  }
}
\end{lstlisting}
\end{figure}

\begin{example}[Airdrop/Exchange]
\label{ex:airdrop-exchange}
Consider an instance of the $\contract{Exchange}$ contract in \Cref{lst:exchange}, to be deployed in a
blockchain state $\sysS$ containing an instance of the $\contract{Airdrop}$ contract in \Cref{lst:airdrop}.
More specifically, let: 
\begin{align*}
    \sysS & = \walu{\pmvM}{n_{\pmvM}}{\tokT} \mid \walpmv{\contract{Airdrop}}{\waltok{n_{\contract{A}}}{\tokT}}
    \\
    \cstD & = \walpmv{\contract{Exchange}}{\waltok{n_{\contract{E}}}{\ETH}, \code{tin}=\tokT, \code{tout}=\ETH, \code{rate}=r, \code{owner}=\pmvA}
\end{align*}
The $\contract{Exchange}$ contract allows any user to swap tokens of type $\code{tin}$ with tokens of type $\code{tout}$ (in the instance, $\tok{T}$ and $\ETH$, respectively), at an exchange rate of 1 unit of $\code{tin}$ for $\code{rate}$ units of $\code{tout}$.
For simplicity, assume that 
$\price{\tokT} = \price{\tok{ETH}} = 1$. 
We evaluate the MEV interference from $\sysS$ to $\cstD$.
When the exchange rate is favourable, \ie $r > 1$, the adversary $\pmvM$ can extract MEV from $\cstD$ by exchanging $\tokT$ for $\ETH$. 
This is possible as far as $\contract{Exchange}$ has enough $\ETH$ balance.
The MEV can be further increased by draining $n_{\contract{A}}:\tokT$ from $\contract{Airdrop}$, and swapping these tokens through the $\contract{Exchange}$.
More precisely, we have:
\begin{align*}
    \lmev{\setenum{\contract{Exchange}}}{\sysS \mid \cstD}{\setenum{\contract{Exchange}}}
    & = \begin{cases}
        n_{\pmvM} \cdot r & \text{if}\ n_{\pmvM} < \nicefrac{n_{\contract{E}}}{r} \\
        n_{\contract{E}} & \text{otherwise}
      \end{cases}
    \\
    \lmev{}{\sysS \mid \cstD}{\setenum{\contract{Exchange}}} 
    & = \begin{cases}
        (n_{\pmvM} + n_{\contract{A}}) \cdot r & \text{if}\  n_{\pmvM}  < \nicefrac{n_{\contract{E}}}{r} - n_{\contract{A}} \\
        n_{\contract{E}} & \text{otherwise}
      \end{cases}
\end{align*}
Therefore, the MEV interference from $\sysS$ on $\cstD$ is given by: 
\begin{align*}
    \qnonint{\sysS}{\cstD} 
    \; = \;
    \begin{cases}
        \nicefrac{n_{\contract{A}}}{(n_{\pmvM} + n_{\contract{A}})}
        & \text{if}\ n_{\pmvM} < \nicefrac{n_{\contract{E}}}{r} - n_{\contract{A}}
        \\
        1 - \nicefrac{n_{\pmvM} \cdot r}{n_{\contract{E}}}
        &\text{if}\ \nicefrac{n_{\contract{E}}}{r} - n_{\contract{A}} \leq n_{\pmvM} < \nicefrac{n_{\contract{E}}}{r} 
        \\
        0 & \text{otherwise}
      \end{cases}
\end{align*}
When $\pmvM$ is sufficiently rich, she can drain the $\contract{Exchange}$ without invoking the $\contract{Airdrop}$.
Instead, when $\pmvM$'s wealth is limited, she is able to inflict a greater loss of $\contract{Exchange}$ by leveraging the $\contract{Airdrop}$.
So, the interference caused to $\contract{Exchange}$ in this case has a dual dependence on the adversary's and the $\contract{Airdrop}$'s wealth.
Furthermore, the interference is inversely proportional to $\pmvM$'s wealth, \ie richer adversaries have less need to exploit the context, resulting in lower interference from $\sysS$ to $\cstD$.
This is coherent with our intuition, since we would expect a poorer adversary to benefit more from exploiting the $\contract{Airdrop}$ than a richer one.
\hfill\qedex
\end{example}

\begin{example}[AMM/Bet]
\label{ex:amm-bet}
The $\contract{Bet}$ contract in~\Cref{lst:bet} allows a player to bet on the exchange rate between a token and $\ETH$.
It is parameterized over an $\contract{oracle}$ that is queried for the token price. 
To enter the bet, the player must match the initial pot set upon deployment.
Before the deadline, the player can win a fraction $\code{potShare}$ of the pot if the oracle exchange rate exceeds or equals $\code{potShare}$ times the rate.
The remaining fraction is taken by the owner.
Consider an instance of $\contract{Bet}$ using the $\contract{AMM}$ in~\Cref{fig:amm} as a price oracle:
\begin{align*}    
\sysS & = \walu{\pmvM}{m}{\ETH} \mid
\walpmv{\contract{AMM}}{\waltok{r_0}{\ETH},\waltok{r_1}{\tokT}} \mid
\code{block.num} = d-k \mid \cdots
\\
\cstD & = \walpmv{\contract{Bet}}{\waltok{b}{\ETH}, \code{owner}=\pmvA, \code{tok}=\tokT, \code{rate}=r, \code{deadline}=d}
\end{align*}
When $\pmvM$ is allowed to leverage $\contract{Bet}$'s dependency, she can manipulate the $\contract{AMM}$ to influence the internal exchange rate.
If $\pmvM$ has sufficient funds to enter the bet, she can fire the following sequence of transactions, where, in the $\txcode{swap}$ transaction, $x = m-b \geq 0$ is the number of $\ETH$ units sent to the $\contract{AMM}$ and $y = \nicefrac{x r_1}{r_0 + x}$ is the number of $\tokT$ units received
(we omit $\pmvM$'s wallet for brevity):

\begin{align*}
    \sysS \mid \cstD
    & \xrightarrow{\pmvM:\contract{Bet}.\txcode{bet}(\pmvM\ \code{pays}\ \waltok{b}{\ETH},p)}
    &&
    \walpmv{\contract{AMM}}{\waltok{r_0}{\ETH},\waltok{r_1}{\tokT}} \mid
    \walpmv{\contract{Bet}}{\waltok{2b}{\ETH},\code{potShare}=p,\cdots} \mid \cdots
    \\
    & \xrightarrow{\pmvM:\contract{AMM}.\txcode{swap}(\pmvM\ \code{pays}\ \waltok{x}{\ETH}, 0)}
    &&
    \walpmv{\contract{AMM}}{\waltok{r_0 + x}{\ETH},\waltok{r_1 - y}{\tokT}} \mid \walpmv{\contract{Bet}}{\waltok{2b}{\ETH},\cdots} \mid \cdots
    \\
    & \xrightarrow{\pmvM:\contract{Bet}.\txcode{win}()}
    &&
    \walpmv{\contract{AMM}}{\waltok{r_0 + x}{\ETH},\waltok{r_1 - y}{\tokT}} \mid \walpmv{\contract{Bet}}{\waltok{2b- 2bp}{\ETH},\cdots} \mid \cdots
    \\
    & \xrightarrow{\pmvM:\contract{AMM}.\txcode{swap}(\pmvM\ \code{pays}\ \waltok{y}{\tokT}, 0)}
    &&
    \walpmv{\contract{AMM}}{\waltok{r_0}{\ETH},\waltok{r_1}{\tokT}} \mid
    \walpmv{\contract{Bet}}{\waltok{2b- 2bp}{\ETH},\cdots} \mid \cdots
\end{align*}
The bet value that maximizes the loss caused to $\contract{Bet}$ depends on $\pmvM$'s wealth, and is given by $p = \nicefrac{r_0 + x}{r(r_1 - y)}$.
Assuming $\pmvM$ enters the bet only for $p \geq \nicefrac{1}{2}$ (since a smaller proportion makes the bet irrational for her), by~\Cref{eq:lmev:unrestricted} we have:
\begin{align*}
    \lmev{}{\sysS \mid \cstD}{\setenum{\contract{Bet}}} 
    =
    \left( \nicefrac{2 (r_0 + m - b)^2}{r r_0 r_1} - 1 \right) b
\end{align*}
If $\pmvM$ can only interact with $\contract{Bet}$, she is limited to settle on a lower bet value: 
\begin{align*}
    \lmev{\setenum{\contract{Bet}}}{\sysS \mid \cstD}{\setenum{\contract{Bet}}}
    =
    \begin{cases}
        \nicefrac{2br_0}{rr_1} -b -1 & \text{if}\  \nicefrac{r_0}{r r_1} \geq \nicefrac{1}{2} \\
        0 & \text{otherwise}
    \end{cases}
\end{align*} 
Accordingly, $\lmev{}{}{}$ interference is estimated through \Cref{def:qnonint} as follows:
\begin{align*}
    \qnonint{\sysS}{\cstD}
    & = 
    \begin{cases}
        1 - \frac{2br_0^2 - rr_0r_1(b+1)}{2b(r_0+m-b)^2 - brr_0r_1}
        & \text{if}\ \nicefrac{r_0}{r r_1} \geq \nicefrac{1}{2} \\
        1 & \text{otherwise}
    \end{cases}
\end{align*}
We observe maximum interference when $\pmvM$ exploits the $\contract{Bet}$ by manipulating the $\contract{AMM}$, which would be impossible by interacting exclusively with $\contract{Bet}$.
Furthermore, the interference value is proportional to the adversarial wealth, as one would anticipate.
By contrast, even if $\pmvM$ was able to empty a portion of the $\contract{Bet}$ by fair play, she can always increase this loss by manipulating the $\contract{AMM}$ (provided she owns adequate funds).
Note that in the composition between $\contract{Bet}$ and $\contract{Exchange}$, the MEV interference is zero, as the adversary cannot manipulate the exchange rate (unless she is the $\contract{Exchange}$ owner).
\hfill\qedex
\end{example}

\begin{figure}[t!]
\begin{lstlisting}[
,language=txscript
,morekeywords={bet,win,close,getTokens,getRate},classoffset=4
,morekeywords={a,b,A},keywordstyle=\pmvColor,classoffset=5
,morekeywords={t,ETH},keywordstyle=\tokColor,classoffset=6
,morekeywords={Bet,oracle},keywordstyle=\cmvColor
,caption={A Bet contract.}
,label={lst:bet}
]
contract (*$\contract[oracle]{Bet}$*) {
  constructor(pay x:ETH, tok_, deadline_, rate_) {
    require tok_!=ETH && oracle.getTokens()==(ETH,tok_);
    tok=tok_; deadline=deadline_; rate=rate_; owner=sender;
  }
  bet(pay x:ETH, p_) { // sender gives x:ETH to Bet and chooses potShare
    require player==null && x==balance(ETH) && p_>=0 && p_<=1;
    potShare = p_; player=sender;
  }
  win() { // only callable by player before the deadline 
    require block.num<=deadline && sender==player;
    if (oracle.getRate(ETH)>=potShare*rate)
      transfer(player, potShare*balance(ETH):ETH);
  }
  close() { // after the deadline, transfer the ETH balance to the owner
    require block.num>deadline;
    transfer(owner, balance(ETH):ETH);
  }
}
\end{lstlisting}
%
\begin{lstlisting}[
,language=txscript
,morekeywords={AMM,addLiq,swap,getTokens,getRate},classoffset=4
,morekeywords={T0,T1,t,tin,tout},keywordstyle=\tokColor,classoffset=5
,caption={A constant-product AMM contract.}
,label={fig:amm}
]
contract (*$\contract{AMM}$*) {
  constructor(pay x0:T0, pay x1:T1) { require x0>0 && x1>0; }
  
  getTokens() { return (T0,T1); }   // token pair
 
  getRate(tout) { // 1:tin for getRate(tout):tout
    if (tout==T0) { tin=T1 } else { tin=T0 };
    return balance(tout)/balance(tin);
  }
  swap(pay x:tin, ymin) { // sell x:tin to buy at least ymin:tout
    if (tin==T0) { tout=T1 } else { tout=T0 };
    y = x*getRate(tout);        // units of token tout sold to sender  
    require ymin<=y<balance(tout);  // the AMM has enough tout tokens
    transfer(sender, y:tout); // send y units of token tout to sender
  } 
}
\end{lstlisting}
\end{figure}

\begin{example}[AMM/Lending Pool]
\label{ex:amm-LP}
The contract $\contract{LP}$ in~\Cref{lst:LP} implements a simplified lending protocol, where users can deposit and borrow tokens. 
Borrowing requires users to have a sufficient \emph{collateralization}~\cite{Gudgeon20aft,BCL21wtsc}. 
This value, defined as the ratio between the value of their deposits and that of their debits, is a measure of the borrowing capacity (full versions of lending protocols include a function that allows liquidators to repay loans of under-collateralized borrowers in exchange for part of their collateral). 
The contract $\contract{LP}$ is parameterized over an $\contract{oracle}$ that is queried for the token prices.
Below we analyze a well-known attack where the underlying $\contract{oracle}$ is an $\contract{AMM}$, which is manipulated by an adversary to increase her  borrowing capacity~\cite{Gudgeon2020cvcbt,Qin21fc,BCL21wtsc,Mackinga22icbc,Arora24asiaccs}.

More specifically, consider the following instance, where $\price{\ETH} = 1 = \price{\tokT}$, the $\contract{AMM}$ is balanced, and the adversary $\pmvM$ has not deposited or borrowed tokens yet:
\begin{align*}    
\sysS = \walu{\pmvM}{n}{\ETH} \mid
\walpmv{\contract{AMM}}{\waltok{r}{\ETH},\waltok{r}{\tokT}}
&&\cstD = \walpmv{\contract{LP}}{\waltok{a}{\ETH}, \waltok{b}{\tokT}, \code{Cmin} = C_{\it min}, \cdots}
\end{align*}

If $\pmvM$ can interact with the $\contract{AMM}$, she has the following attack strategy: 
deposit $(n-x):\ETH$ to the $\contract{LP}$, and use the remaining $x:\ETH$ to inflate the price of $\tokT$ in the $\contract{AMM}$.
This allows $\pmvM$ to increase the amount of $\tokT$ she can borrow, since the $\contract{LP}$ now uses an artificially inflated price to determine her borrowing capacity.

To implement this strategy, $\pmvM$ fires the following sequence of transactions,
where we denote by $y$ the amount of $\tokT$ units that $\pmvM$ receives from the $\txcode{swap}$, and with $t$ the amount of $\tokT$ units that $\pmvM$ manages to borrow from the $\contract{LP}$
(below, we omit $\pmvM$'s wallet, and the parts of the state that do not change upon a transition): 
\begin{align*}
    \sysS \mid \cstD
    & \xrightarrow{\pmvM:\contract{LP}.\txcode{deposit}(\pmvM\ \code{pays}\ \waltok{(n-x)}{\ETH})}
    &&
    \walpmv{\contract{AMM}}{\waltok{r}{\ETH},\waltok{r}{\tokT}}
    \mid
    \walpmv{\contract{LP}}{\waltok{a + n - x}{\ETH}, \waltok{b}{\tokT}, \cdots} \mid \cdots
    \\
    & \xrightarrow{\pmvM:\contract{AMM}.\txcode{swap}(\pmvM\ \code{pays}\ \waltok{x}{\ETH}, 0)}
    &&
    \walpmv{\contract{AMM}}{\waltok{r + x}{\ETH},\waltok{r - y}{\tokT}} 
    \mid \cdots
    \\
    & \xrightarrow{\pmvM:\contract{LP}.\txcode{borrow}(t,\tokT)}
    &&
    \cdots \mid 
    \walpmv{\contract{LP}}{\waltok{a + n - x}{\ETH},\waltok{b - t}{\tokT}, \cdots} \mid \cdots
    \\
    & \xrightarrow{\pmvM:\contract{AMM}.\txcode{swap}(\pmvM\ \code{pays}\ \waltok{y}{\tokT}, 0)}
    &&
    \walpmv{\contract{AMM}}{\waltok{r}{\ETH},\waltok{r}{\tokT}}
    \mid \cdots
\end{align*}

\noindent
The amount that $\pmvM$ can borrow (as a function of $x$) is 
$t = \nicefrac{(n-x) (r + x)^2}{C_{\it min} (r - y)^2}$.
Its maximum is obtained for $x = \nicefrac{4n-r}{5}$
when $\pmvM$ benefits from the manipulation (\ie, when $4n \geq r$), 
and for $x=0$ otherwise.

Assuming that the $\contract{LP}$ has sufficient funds, the unrestricted MEV is given by:
\begin{align*}
    \lmev{}{\sysS \mid \cstD}{\setenum{\contract{LP}}} 
    & \; = \; \frac{(n-x)(r+x)^2}{C_{min}(r-y)^2} + x - n
    \\
    & \; = \; \begin{cases}
        \left( \frac{n+r}{5} \right) \left( \frac{1}{C_{min}} \left( \frac{4(n+r)}{5r}\right)^4 - 1 \right) \;
        & \text{if}\ 4n \geq r
        \\
        n \left( \frac{1}{C_{min}} - 1 \right) & \text{otherwise}
        \end{cases}
\end{align*}
On the contrary, if $\pmvM$ was restricted to interact with the $\contract{LP}$ only, she suffers a reduced borrowing allowance.
By~\Cref{eq:lmev} we have:
\begin{align*}
    \lmev{\setenum{\contract{LP}}}{\sysS \mid \cstD}{\setenum{\contract{LP}}}
    = n \left( \frac{1}{C_{min}} - 1 \right) 
\end{align*} 
Accordingly, $\lmev{}{}{}$ interference is estimated through \Cref{def:qnonint} as follows:
\begin{align*}
    \qnonint{\sysS}{\cstD}
    = \begin{cases}
        1 \; - \; \frac{5^5 r^4 n(1-C_{min})}{(n+r)\left( 4^4(n+r)^4 - (5r)^4C_{min}\right)} \;
        & \text{if}\ 4n \geq r \\
        0 \; & \text{otherwise}
    \end{cases}
\end{align*}
In accordance with our expectations, the interference is indeed proportional to the attack capital $n$ of the adversary.
Naturally, adversaries with higher manipulation capital experience an increased borrowing capacity.
Moreover, the degree of interference is influenced by the $\contract{AMM}$ reserves since the profitability of the attack rests on the cost of manipulating and de-manipulating the $\contract{AMM}$.
\hfill\qedex
\end{example}

\begin{figure}[t]
\begin{lstlisting}[
  ,language=txscript
  ,morekeywords={LP,collateral,deposit,borrow,getTokens,getRate},classoffset=4
  ,morekeywords={a,b,A,Oracle},keywordstyle=\pmvColor,classoffset=4
  ,morekeywords={t,t1,t2,T,T0,T1},keywordstyle=\tokColor
  ,caption={A Lending Pool contract (simplified).}
  ,label={lst:LP}
]
contract (*$\contract[oracle]{LP}$*) {
  constructor(Cmin_) { Cmin = Cmin_; } // collateralization threshold
  
  collateral(a) { // return a's collateralization
    val_minted = 0;
    for c in minted: val_minted += minted[t][a] * (*$\contract{oracle}$*).getRate(t);
    val_debts = 0;
    for c in debts:  val_debts  += debt[t][a] * (*$\contract{oracle}$*).getRate(t);
    return val_minted / val_debts; 
  } 
  deposit(a pays x:t) { // a deposits x units of token t in the LP
    minted[t][a] += x;  // record the deposited units in the minted map
  }
  borrow(a sig, x, t) { // a borrows x units of token t in the LP
    require balance(t)>=x;
    debts[t][a] += x;   // record the borrowed units in the debts map
    require collateral(a)>=Cmin; // a is over-collateralized
    transfer(a, x:t);
  }
}
\end{lstlisting}
\end{figure}
\section{Conclusions}
\label{sec:conclusions}

We have proposed a notion of economic security for smart contract compositions, which quantifies the potential economic loss an adversary can inflict on a contract by targeting its dependencies. Below, we discuss some limitations of our approach and directions for future work.

\paragraph{Limitations}
To keep our theory manageable, we have made a few simplifying assumptions in our model.
A first assumption is that the prices of native crypto-assets are constant. Consequently, the amount of MEV interference is not affected by fluctuations of these prices (while they could depend on the prices 
provided by DEXes, like in~\Cref{ex:amm-bet,ex:amm-LP}).
Handling price updates would require to extend blockchain states with a function mapping tokens to their prices.
Another assumption is that the local MEV in~\Cref{eq:lmev} does not allow adversaries to exploit their knowledge of pending users' transactions (the public \emph{mempool}).
The rationale underlying this choice is that, in our vision, MEV interference should be the basis for a static analysis of smart contracts, where dynamic data such as the mempool transactions are not known.
Assuming an over-approximation of users' transactions, we could extend our MEV interference by making the mempool a parameter of local MEV, similarly to what done for the theory of MEV in~\cite{BZ25fc}.

\paragraph{Future work} 
While some tools exist for detecting price manipulation attacks in DeFi protocols~\cite{Wu21defiranger,Kong23defitainter,KaWaiWu25flashdefier}, and others for estimating MEV opportunities~\cite{Babel23clockwork,Babel23ccs}, 
there remains a gap in addressing general economic attacks on smart contract compositions.
A common analysis technique underlying the detection of price manipulation attacks --- also employed by some of the tools mentioned above --- is \emph{taint analysis}, which aims at identifying potential data flows from low-level to high-level data.
In the DeFi setting, this typically corresponds to flows from to functions that influence token prices to functions that transfer tokens.
While this technique could potentially be generalised to analyse \emph{qualitative} MEV non-interference, capturing our notion of \emph{quantitative} interference seems to require more advanced techniques. 
Some inspiration could be drawn from static analysis techniques for information-theoretic interference~\cite{Clark07jcs,Kopf13sfm,Klebanov14tcs,Assaf17popl}.
We plan to explore this research line in future work.
Our blockchain model represents crypto-assets as token types with primitive transfer operations and built-in linearity guarantees preventing asset creation or destruction. 
In practice, several blockchains including Ethereum do not have native support for custom tokens, but rather require to implement 
them as smart contracts exposing standard interfaces.
This opens the door for attackers to exploit potential discrepancies between these implementations and the standards, possibly leading to MEV~\cite{Chen19ccs}.
Applying our MEV interference analysis to such compositions is left as future work.

\iftoggle{anonymous}{}{\paragraph*{Acknowledgments}

Work partially supported by project SERICS (PE00000014)
under the MUR National Recovery and Resilience Plan (NRRP) funded by the European Union -- NextGenerationEU, and by PRIN 2022 NRRP project DeLiCE (F53D23009130001).}

\bibliographystyle{splncs04}
\bibliography{main}

\iftoggle{arxiv}{%
\newpage
\appendix
\section{Proofs: properties of MEV interference} 
\label{sec:proofs}

We start by recalling from~\cite{BMZ24fc} a few useful properties of local MEV. 
We define the relation $\preceq$ between contract states as follows:
\[
\cstC \preceq \cstD
\quad\iff\quad
\forall \contract{C} \in \dom{\cstC}.\
\contract{C} \in \dom{\cstD}
\,\land\,
\cstC(\contract{C}) = \cstD(\contract{C})
\]
Therefore, the condition $\cstC \preceq \cstD$ in~\Cref{lem:lmev:monotonicity} of~\Cref{lem:lmev} means that $\cstD$ is a widening of the state $\cstC$ with other arbitrary contract states. 

\begin{lemma}[Basic properties of MEV~\cite{BMZ24fc}]
  \label{lem:lmev}
  For all $\sysS$, $\CmvC,\CmvD \subseteq \CmvU$:
  \begin{enumerate}

  \item \label{lem:lmev:mev}
    $\lmev{\CmvD}{\sysS}{\emptyset} = \lmev{\emptyset}{\sysS}{\CmvC} = 0$,
    $\lmev{\CmvU}{\sysS}{\CmvU} \geq \mev{}{\sysS}{}$

  \item \label{lem:lmev:L-leq-H}
    if $\CmvD \subseteq \CmvDi$, then
    $\lmev{\CmvD}{\sysS}{\CmvC} \leq \lmev{\CmvDi}{\sysS}{\CmvC}$

  \item \label{lem:lmev:monotonicity}
    $\lmev{\CmvD}{\WmvA \mid \cstC}{\CmvC} \leq \lmev{\CmvD}{\WmvA \mid \cstD}{\CmvC}$ if $\cstC \preceq \cstD$

  \item \label{lem:lmev:garbage}
    $\lmev{\CmvD}{\WmvA \mid \cstC}{\CmvC} = \lmev{\CmvD}{\WmvA \mid \cstC}{\CmvC \cap \cmvOfcst{\cstC}} = \lmev{\CmvD\cap \cmvOfcst{\cstC}}{\WmvA \mid \cstC}{\CmvC}$
    
  \item \label{lem:lmev:leq-wealth}
    $0 \leq \lmev{\CmvD}{\sysS}{\CmvC} \leq \wealth{\CmvC}{\sysS}$
    
  \end{enumerate}
\end{lemma}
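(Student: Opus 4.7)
The plan is to prove each of the five items by unfolding the definition of local MEV given in~\Cref{eq:lmev} and exploiting the wealth function~\Cref{eq:wealth}, together with basic properties of the transition relation on blockchain states. Since~\cite{BMZ24fc} already proves these facts, the goal is to restate proofs that are self-contained for use in our later arguments about $\qnonint{\sysS}{\cstD}$. The strategy is to handle items (1), (2), (4), (5) by direct manipulation of the quantifier in~\Cref{eq:lmev}, and to reserve the semantic argument for item~(3).

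For item~(1), I would argue that $\wealth{\emptyset}{\sysSi} = 0$ for any $\sysSi$ by~\Cref{eq:wealth}, so the difference in the maximand is identically zero; for $\lmev{\emptyset}{\sysS}{\CmvC}$, I would observe that $\mall{\emptyset}{\Adv} = \emptyset$, so the only sequence permitted is the empty one, giving $\sysSi = \sysS$ and thus a zero maximum. The inequality $\lmev{\CmvU}{\sysS}{\CmvU} \geq \mev{}{\sysS}{}$ follows by noting that $\mall{\CmvU}{\Adv}$ is the full set of adversarial transactions and the maximand ranges over all contracts. For item~(2), the key observation is that $\CmvD \subseteq \CmvDi$ implies the inclusion $\mall{\CmvD}{\Adv} \subseteq \mall{\CmvDi}{\Adv}$, so every sequence witnessing the left-hand side also witnesses the right-hand side, and taking the maximum over a superset can only increase it. For item~(4), the wealth definition sums over $\cmvC \in \CmvC$ and $\tokT$ using $\cstC(\cmvC)$, which is undefined (equivalently, contributes $0$) when $\cmvC \notin \cmvOfcst{\cstC}$; hence removing these contracts from $\CmvC$ does not change $\wealth{\CmvC}{\cdot}$ along the relevant states. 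The second equality uses the well-formedness assumption that transactions targeting contracts not in $\cmvOfcst{\cstC}$ revert, leaving $\sysSi = \sysS$, so restricting $\CmvD$ to $\cmvOfcst{\cstC}$ does not shrink the set of reachable $\sysSi$. Item~(5) follows from $\sysS \xrightarrow{\emptyseq} \sysS$ witnessing a $0$ value in the maximand (lower bound) and from $\wealth{\CmvC}{\sysSi} \geq 0$, which forces $\wealth{\CmvC}{\sysS} - \wealth{\CmvC}{\sysSi} \leq \wealth{\CmvC}{\sysS}$ (upper bound).

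The main obstacle is item~(3), which requires a semantic argument that widening the contract state preserves the set of achievable wealth changes on $\CmvC$. I would prove it by induction on the length of the sequence $\TxTS \in \mall{\CmvD}{\Adv}^*$ witnessing the left-hand side. The inductive step would show that any single transition $\sysS[1] \xrightarrow{\txT} \sysS[2]$ in $\WmvA \mid \cstC$ can be mirrored by a transition $\sysS[1] \mid \cstDi \xrightarrow{\txT} \sysS[2] \mid \cstDi$ in $\WmvA \mid \cstD$, where $\cstDi = \cstD \setminus \cstC$ contains the extra contracts. This in turn relies on two features of the abstract transition model introduced in~\Cref{sec:model}: determinism, and the locality property that contracts not touched during a transaction retain their state. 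With the mirroring established, the wealth $\wealth{\CmvC}{\cdot}$ is unchanged by the added contracts (since $\CmvC \subseteq \cmvOfcst{\cstC}$ by item~(4), after restriction), so both sides have the same set of witnessed values and the maximum on the right is at least that on the left.

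Overall the proof is a sequence of short direct arguments. Items~(1), (2), (4), (5) are almost immediate from the definitions, while item~(3) is the only one requiring a semantic lemma about the transition relation, which is where I would spend most of the care.
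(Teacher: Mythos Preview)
The paper does not include its own proof of this lemma: it is stated in the appendix with a citation to~\cite{BMZ24fc} and immediately followed by the next lemma, so it is used as a black box. There is therefore no in-paper proof to compare against.

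Your sketch is correct and would serve as a self-contained justification. Two minor remarks. For item~(3), the mirroring step also needs the well-formedness of $\WmvA \mid \cstC$ to ensure that no contract in $\cstC$ makes an internal call into the added contracts $\cstD \setminus \cstC$; this is what makes the locality argument go through, and you use it implicitly without naming it. Your appeal to item~(4) at the end of the item~(3) argument is unnecessary: even if $\CmvC$ contains some contracts from $\cstD \setminus \cstC$, those contracts are untouched by the mirrored run, so their contribution to the wealth \emph{difference} is zero regardless, and the inequality follows directly. For item~(4), the fact that transactions to absent contracts revert is a property of the operational semantics rather than of the well-formedness assumption, but the conclusion is unaffected.
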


\Cref{lem:lmev-wallet} states that the only user wallets that
need to be taken into account to estimate the  MEV are those of the adversary.
This is because $\Adv$ has no way to force other users to spend their
tokens in the attack sequence.%

\begin{lemma}[MEV and adversaries' wallets~\cite{BMZ24fc}]
  \label{lem:lmev-wallet}
  If $\dom{\WmvA[\Adv]} = \Adv$, then
  \[
    \lmev{\CmvD}{\WmvA[\Adv] \mid \WmvA \mid \cstC}{\CmvC} = \lmev{\CmvD}{\WmvA[\Adv] \mid \cstC}{\CmvC}
  \]
\end{lemma}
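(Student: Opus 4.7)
The plan is to unfold the definition of local MEV on both sides and establish that, under the hypothesis $\dom{\WmvA[\Adv]} = \Adv$, attack sequences from the extended state correspond to attack sequences from the restricted state that produce identical resulting wealth on $\CmvC$. Since every transaction in $\mall{\CmvD}{\Adv}$ is signed by some adversary in $\Adv$, and since the contract model stipulates that functions cannot mint, burn, or drain tokens from arbitrary accounts, the non-adversarial wallets $\WmvA$ are inert under any adversarial transaction sequence.

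The central step is a \emph{frame lemma}: for any $\txT \in \mall{\CmvD}{\Adv}$, if $\WmvA[\Adv] \mid \cstC \xrightarrow{\txT} \WmvAi[\Adv] \mid \cstCi$, then $\WmvA[\Adv] \mid \WmvA \mid \cstC \xrightarrow{\txT} \WmvAi[\Adv] \mid \WmvA \mid \cstCi$, and conversely; in both directions the wallets of non-adversaries remain unchanged and the resulting adversarial wallets and contract states coincide. This follows from determinism of the transition relation and the fact that the evaluation of the called function can only read from and write to the wallet of the signer (a member of $\Adv$, hence disjoint from $\dom{\WmvA}$) and to the contracts reachable via calls. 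Extending this by induction on the length of $\TxTS \in \mall{\CmvD}{\Adv}^*$ gives the same invariant for arbitrary attack sequences.

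Given the frame lemma, I would prove the two inequalities separately. For $\leq$, pick a sequence $\TxTS$ witnessing the left-hand side, leading from $\WmvA[\Adv] \mid \WmvA \mid \cstC$ to a state with contract component $\cstCi$. By the frame lemma, the same $\TxTS$ is enabled from $\WmvA[\Adv] \mid \cstC$ and terminates in a state with the same $\cstCi$. Since by~\Cref{eq:wealth} the wealth $\wealth{\CmvC}{\cdot}$ depends only on the contract component, both runs yield the same loss, so the maximum on the right-hand side is at least as large. The converse direction is entirely symmetric, adding the inert $\WmvA$ to each intermediate state.

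The main obstacle is pinning down the frame lemma, since the contract model in~\Cref{sec:model} is deliberately abstract and the transition relation is left unspecified at the operational level. The argument must appeal to the high-level guarantees stated in~\Cref{sec:model} (functions only manipulate the tokens of the caller and of the called contracts, and cannot drain arbitrary accounts) together with determinism of $\xrightarrow{}$. Once these are committed to, the rest is a routine induction plus the observation that $\wealth{\CmvC}{\cdot}$ only inspects contract states, so the wallets of non-adversaries do not contribute to either side of the equation.
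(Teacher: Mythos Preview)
The paper does not supply a proof of this lemma: it is recalled from~\cite{BMZ24fc} in~\Cref{sec:proofs} as a known property of local MEV and then invoked as a black box in the proof of~\Cref{prop:qnonint:adv wallets}. There is therefore no paper argument to compare yours against; your approach is the natural one and matches how such a statement is typically justified.

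One imprecision is worth flagging. Your frame lemma asserts that the non-adversarial wallets $\WmvA$ remain \emph{unchanged} along any run of $\TxTS \in \mall{\CmvD}{\Adv}^*$. This need not hold: item~(ii) of the function capabilities listed in~\Cref{sec:model} allows a contract to transfer tokens to \emph{arbitrary} user accounts, including users in $\dom{\WmvA}$, so $\WmvA$ may grow along the run. What your argument actually requires, and what suffices, is only that the \emph{contract} component $\cstCi$ reached after $\TxTS$ is the same from both starting states; by~\eqref{eq:wealth} the wealth $\wealth{\CmvC}{\cdot}$ ignores user wallets entirely, so whatever happens to $\WmvA$ is immaterial to the loss. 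A cleaner frame lemma would assert just this coincidence of contract states. You should also check that, in the state without $\WmvA$, a contract-level transfer targeting an account in $\dom{\WmvA}$ does not cause the transition to abort; this relies on the (standard, but here implicit) convention that transferring to an absent user account simply creates it.
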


\proofof{lem:qnonint:basic}
For Item~\ref{lem:qnonint:basic:1},
by~\Cref{lem:lmev:mev} of~\Cref{lem:lmev} we have that
$\lmev{}{\sysS\mid\emptyset}{\cmvOfcst{\emptyset}} = 0$.
The thesis follows by~\Cref{def:qnonint}.

\medskip\noindent
For Item~\ref{lem:qnonint:basic:2}, 
by \Cref{lem:lmev:garbage} of~\Cref{lem:lmev} we have: 
\[
    \lmev{}{\WmvA \mid \emptyset \mid \cstD}{\cmvOfcst{\cstD}} = \lmev{\cmvOfcst{\cstD}}{\WmvA \mid \emptyset \mid \cstD}{\cmvOfcst{\cstD}}
\]
which gives us $\qnonint{\WmvA \mid \emptyset}{\cstD} = 0$, and hence we have our thesis.

\medskip\noindent
For Item~\ref{lem:qnonint:basic:3}, there are two cases. If $\lmev{}{\sysS\mid\cstD}{\cmvOfcst{\cstD}} = 0$, 
then 
$\qnonint{\sysS}{\cstD} = 0$ holds by definition.
Otherwise, by~\Cref{lem:lmev:leq-wealth,lem:lmev:L-leq-H} of~\Cref{lem:lmev}:
\begin{align*}
    &0 
    \leq \lmev{\cmvOfcst{\cstD}}{\sysS \mid \cstD}{\cmvOfcst{\cstD}}
    \leq \lmev{}{\sysS \mid \cstD}{\cmvOfcst{\cstD}}
    \\
    \implies &0 
    \leq \frac{\lmev{\cmvOfcst{\cstD}}{\sysS \mid \cstD}{\cmvOfcst{\cstD}}}{\lmev{}{\sysS \mid \cstD}{\cmvOfcst{\cstD}}}
    \leq 1
    \\
    \implies &0 
    \leq 1 - \frac{\lmev{\cmvOfcst{\cstD}}{\sysS \mid \cstD}{\cmvOfcst{\cstD}}}{\lmev{}{\sysS \mid \cstD}{\cmvOfcst{\cstD}}}
    \leq 1
\end{align*}
which implies $0 \leq \qnonint{\sysS}{\cstD} \leq 1$, giving us our thesis.
\qed

\proofof{lem:qnonint:zero-wealth}
From~\Cref{lem:lmev:leq-wealth,lem:lmev:L-leq-H} of~\Cref{lem:lmev}, we have:
\begin{align*}
    0 \leq \lmev{\cmvOfcst{\cstD}}{\sysS\mid\cstD}{\cmvOfcst{\cstD}} \leq \lmev{}{\sysS\mid\cstD}{\cmvOfcst{\cstD}} \leq \wealth{\cmvOfcst{\cstD}}{\cstD}
\end{align*}
By hypothesis, $\wealth{\cmvOfcst{\cstD}}{\cstD} = 0$. So, by the inequalities above, $\lmev{}{\sysS\mid\cstD}{\cmvOfcst{\cstD}} = 0$. 
\Cref{def:qnonint} gives the thesis.
\qed

\begin{definition}[Gain]
\label{def:gain}
The gain of $\CmvC \subseteq \CmvU$ when
a transaction sequence $\TxTS$ is fired in $\sysS$ is given by
\(
\gain{\CmvC}{\sysS}{\TxTS}
=
\wealth{\CmvC}{\sysSi} - \wealth{\CmvC}{\sysS}
\)
if $\sysS \xrightarrow{\TxTS} \sysSi$.

\medskip\noindent
Dually, the loss of $\CmvC \subseteq \CmvU$ when a transaction sequence $\TxTS$ is fired in $\sysS$ is given by
  \(
  -\gain{\CmvC}{\sysS}{\TxTS}
  =
  \wealth{\CmvC}{\sysS} - \wealth{\CmvC}{\sysSi}
  \)
if $\sysS \xrightarrow{\TxTS} \sysSi$.
\end{definition}

\medskip
\Cref{lem:lmev:uniform} states that widening the contract state $\cstC$ preserves the MEV extractable from the target contracts.
This is because the contracts allowed to be targeted by the adversary, \ie $\CmvD$, are not widened.
This refines~\Cref{lem:lmev:monotonicity} of~\Cref{lem:lmev}, giving an equality under the additional assumption $\CmvD \subseteq \cmvOfcst{\cstC}$.

\begin{lemma}
\label{lem:lmev:uniform}
\label{lem:lmev:prepending-state}
\label{lem:lmev:monotonicity-eq}
$\lmev{\CmvD}{\WmvA \mid \cstC}{\CmvC} = \lmev{\CmvD}{\WmvA \mid \cstD}{\CmvC}$
when $\CmvD \subseteq \cmvOfcst{\cstC}$
and \mbox{$\cstC \preceq \cstD$}.
\end{lemma}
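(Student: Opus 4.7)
The plan is to establish the two inequalities separately. The direction $\lmev{\CmvD}{\WmvA \mid \cstC}{\CmvC} \leq \lmev{\CmvD}{\WmvA \mid \cstD}{\CmvC}$ is immediate from~\Cref{lem:lmev:monotonicity} of~\Cref{lem:lmev}, so the only real work is the reverse inequality.

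For the reverse inequality, the idea is a simulation argument: any attack strategy $\TxTS \in \mall{\CmvD}{\Adv}^*$ that produces a loss $\wealth{\CmvC}{\WmvA \mid \cstD} - \wealth{\CmvC}{\sysSi}$ in the wider state $\WmvA \mid \cstD \xrightarrow{\TxTS} \sysSi$ can be replayed verbatim in the narrower state $\WmvA \mid \cstC$, reaching some $\sysSii$ with $\WmvA \mid \cstC \xrightarrow{\TxTS} \sysSii$, and yielding the same loss for $\CmvC$. I would first argue that every transaction in $\TxTS$ still executes successfully in $\WmvA \mid \cstC$: since $\CmvD \subseteq \cmvOfcst{\cstC}$ and $\cstC \preceq \cstD$, the callees and their on-chain states coincide in both configurations. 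Then, using the deployment-order assumption on $\prec$ (contracts in $\cstC$ cannot depend on contracts in $\cstD \setminus \cstC$, because the latter must have been deployed strictly later), I would show by induction on the length of $\TxTS$ that all contracts in $\deps{\CmvD} \subseteq \cmvOfcst{\cstC}$ evolve identically in both executions, while all contracts in $\cmvOfcst{\cstD} \setminus \cmvOfcst{\cstC}$ remain untouched in the wider execution; the user wallets in $\WmvA$ also evolve identically.

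With this simulation established, the loss computation is a direct comparison of~\eqref{eq:wealth}: splitting $\CmvC = (\CmvC \cap \cmvOfcst{\cstC}) \cup (\CmvC \cap (\cmvOfcst{\cstD} \setminus \cmvOfcst{\cstC}))$, the second summand contributes the same amount to $\wealth{\CmvC}{\WmvA \mid \cstD}$ and to $\wealth{\CmvC}{\sysSi}$ (because those contract wallets are unchanged), so it cancels in the difference, while the first summand gives exactly the loss in the narrower execution. Hence $\TxTS$ achieves the same loss starting from $\WmvA \mid \cstC$, and taking the maximum over all such $\TxTS$ yields $\lmev{\CmvD}{\WmvA \mid \cstD}{\CmvC} \leq \lmev{\CmvD}{\WmvA \mid \cstC}{\CmvC}$.

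The main obstacle is the invariance step: justifying that contracts in $\cstD \setminus \cstC$ and their wallets are never touched during an attack targeting $\CmvD \subseteq \cmvOfcst{\cstC}$. This crucially hinges on the partial-order condition on $\prec$ from~\Cref{sec:model}, which prevents contracts in $\cstC$ from (transitively) invoking anything outside $\deps{\cmvOfcst{\cstC}} \subseteq \cmvOfcst{\cstC}$; once this invariant is in place, the wealth cancellation is routine and the two maxima over $\mall{\CmvD}{\Adv}^*$ coincide.
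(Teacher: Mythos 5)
Your proposal is correct and follows essentially the same route as the paper's proof: the $\leq$ direction via monotonicity (\Cref{lem:lmev:monotonicity} of~\Cref{lem:lmev}), and the $\geq$ direction by replaying any maximizing $\TxTS \in \mall{\CmvD}{\Adv}^*$ verbatim in $\WmvA \mid \cstC$, using well-formedness to show the extra contracts $\cstD \setminus \cstC$ are never touched, and then cancelling their (unchanged) wealth in the loss computation. The only cosmetic difference is that you phrase the invariance step as an explicit induction on $|\TxTS|$, where the paper asserts it directly from the absence of calls out of $\deps{\cmvOfcst{\cstC}}$.
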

\begin{proof}
    The inequality $\leq$ follows directly from~\Cref{lem:lmev:monotonicity} of~\Cref{lem:lmev}.
    For the inequality $\geq$, 
    assume that $\cstD$ is the composition of the contracts $\cstC$ with some other contracts $\bar{\cstC}$, \ie 
    $\cstC \preceq \cstD$, $\bar{\cstC} \preceq \cstD$,
    and $\cstD \preceq \cstC \mid \bar{\cstC}$.
    %
    Let $\TxTS \in \mall{\CmvD}{\Adv}^*$ be a valid sequence of transactions that maximizes the loss $-\gain{\CmvC}{\WmvA \mid \cstD}{\TxTS}$.
    Since $\TxTS$ consists of transactions targeting contracts in $\CmvD \subseteq \cmvOfcst{\cstC}$ and since, by the well-formedness assumption, there are no internal calls from $\cstC$ to $\bar{\cstC}$, the contracts in $\bar{\cstC}$ are not affected by $\TxTS$.
    Hence, executing $\TxTS$ yields a transition of the form:
    \[
    \WmvA \mid \cstD 
    \; \xrightarrow{\TxTS} \;
    \WmvAi \mid \cstDi
    \tag*{\text{where $\bar{\cstC} \preceq \cstDi$}}
    \]
    As noted above, $\TxTS$ does not include any direct/indirect calls to $\cmvOfcst{\bar{\cstC}}$,  
    and so $\TxTS$ is also valid in $\WmvA \mid \cstC$. 
    Therefore, we also have some $\cstCi$ such that:
    \[
    \WmvA \mid \cstC 
    \; \xrightarrow{\TxTS} \; 
    \WmvAi \mid \cstCi
    \]
    To prove that the loss is constant, observe that:
    \begin{align*}
        \gain{\CmvC}{\WmvA \mid \cstC}{\TxTS}
        &= \wealth{\CmvC}{\WmvAi \mid \cstCi} - \wealth{\CmvC}{\WmvA \mid \cstC}
        \\
        &= \wealth{\CmvC}{\cstCi} - \wealth{\CmvC}{\cstC}
        \\
        &= \wealth{\CmvC}{\cstDi} - \wealth{\CmvC}{\bar{\cstC}} - \wealth{\CmvC}{\cstD} + \wealth{\CmvC}{\bar{\cstC}}
        \\
        &= \wealth{\CmvC}{\cstDi} - \wealth{\CmvC}{\cstD}
        \\
        &= \wealth{\CmvC}{\WmvAi \mid \cstDi} - \wealth{\CmvC}{\WmvA \mid \cstD}
        \\
        &= \gain{\CmvC}{\WmvA \mid \cstD}{\TxTS}
    \end{align*}
    This implies that:
    \begin{align*}
    \lmev{\CmvD}{\WmvA \mid \cstD}{\CmvC} \leq \lmev{\CmvD}{\WmvA \mid \cstC}{\CmvC}
    \end{align*}
    which gives our thesis.
    \qed

\end{proof}

\proofof{prop:qnonint:larger-state}
By~\Cref{def:qnonint}, we have two cases.

\medskip\noindent
If $\lmev{}{\sysS\mid\cstD}{\cmvOfcst{\cstD}} = 0$,
then  $\qnonint{\sysS}{\cstD} = 0$.
From~\Cref{lem:qnonint:basic}\ref{lem:qnonint:basic:3},
we have $\qnonint{\sysS \mid \cstC}{\cstD} \geq 0$.
This implies the thesis, $\qnonint{\sysS}{\cstD} \leq \qnonint{\sysS \mid \cstC}{\cstD}$.

\medskip\noindent
Otherwise, assume that $\lmev{}{\sysS\mid\cstD}{\cmvOfcst{\cstD}} > 0$. 
Then, by~\Cref{def:qnonint}:
\begin{align*}
        \qnonint{\sysS}{\cstD}
        &= 1 - 
        \frac
        {\lmev{\cmvOfcst{\cstD}}{\sysS \mid \cstD}{\cmvOfcst{\cstD}}}
        {\lmev{}{\sysS \mid \cstD}{\cmvOfcst{\cstD}}}
\end{align*}
Now, by~\Cref{lem:lmev:monotonicity} of~\Cref{lem:lmev}, we have that:
\begin{align*}
    0 < \lmev{}{\sysS\mid\cstD}{\cmvOfcst{\cstD}}
    & \leq \lmev{}{\sysS\mid\cstC\mid\cstD}{\cmvOfcst{\cstD}}
\end{align*}
Therefore, by~\Cref{def:qnonint}:
\begin{align*}
\qnonint{\sysS \mid \cstC}{\cstD}
& = 1 - 
\frac
{\lmev{\cmvOfcst{\cstD}}{\sysS \mid \cstC \mid \cstD}{\cmvOfcst{\cstD}}}
{\lmev{}{\sysS \mid \cstC \mid \cstD}{\cmvOfcst{\cstD}}}
\end{align*}

\noindent
From~\Cref{lem:lmev}, we have that:
\begin{align}
\lmev{\cmvOfcst{(\sysS \mid \cstD)}}{\sysS \mid \cstD}{\cmvOfcst{\cstD}} 
\nonumber
& \leq 
\lmev{\cmvOfcst{(\sysS\mid\cstD)}}{\sysS \mid \cstC \mid \cstD}{\cmvOfcst{\cstD}}
&& \text{by~\Cref{lem:lmev:monotonicity}} 
\\
& \leq 
\label{eq:qnonint:larger-state:1}
\lmev{\cmvOfcst{(\sysS\mid\cstC \mid \cstD)}}{\sysS \mid \cstC \mid \cstD}{\cmvOfcst{\cstD}}
&& \text{by~\Cref{lem:lmev:L-leq-H}} 
\end{align}

\noindent
We know from~\eqref{eq:qnonint:larger-state:1},
\[
    \lmev{}{\sysS\mid\cstD}{\cmvOfcst{\cstD}}
    \leq 
    \lmev{}{\sysS\mid\cstC\mid\cstD}{ \cmvOfcst{\cstD}}
\]
Taking the reciprocal on both sides gives us:
\[
    \frac{1}{\lmev{}{\sysS\mid\cstD}{\cmvOfcst{\cstD}}}
    \geq
    \frac{1}{\lmev{}{\sysS\mid\cstC\mid\cstD}{ \cmvOfcst{\cstD}}}
\]
By~\Cref{lem:lmev:prepending-state}, we have
\(
\lmev{\cmvOfcst{\cstD}}{{\sysS \mid \cstD}}{\cmvOfcst{\cstD}} = \lmev{\cmvOfcst{\cstD}}{{\sysS \mid \cstC \mid \cstD}}{\cmvOfcst{\cstD}}
\).
Then:
\[
    \frac{\lmev{\cmvOfcst{\cstD}}{\sysS\mid\cstD}{\cmvOfcst{\cstD}}}{\lmev{}{\sysS\mid\cstD}{\cmvOfcst{\cstD}}}
    \geq
    \frac{\lmev{\cmvOfcst{\cstD}}{\sysS\mid\cstC\mid\cstD}{\cmvOfcst{\cstD}}}{\lmev{}{\sysS\mid\cstC\mid\cstD}{ \cmvOfcst{\cstD}}}
\]
which finally gives us:
\[
    1 - \frac{\lmev{\cmvOfcst{\cstD}}{\sysS\mid\cstD}{\cmvOfcst{\cstD}}}{\lmev{}{\sysS\mid\cstD}{\cmvOfcst{\cstD}}}
    \leq
    1 - \frac{\lmev{\cmvOfcst{\cstD}}{\sysS\mid\cstC\mid\cstD}{\cmvOfcst{\cstD}}}{\lmev{}{\sysS\mid\cstC\mid\cstD}{ \cmvOfcst{\cstD}}}
\]
which gives our thesis, \ie $\qnonint{\sysS}{\cstD} \leq \qnonint{\sysS \mid \cstC}{\cstD}$.
\qed

\proofof{prop:qnonint:adv wallets}
By~\Cref{lem:lmev-wallet} we have that, for all $\CmvC,\CmvD$ and for all $\cstCi$:
\[
    \dom{\WmvA[\Adv]} = \Adv \implies \lmev{\CmvD}{\WmvA[\Adv] \mid \WmvA \mid \cstCi}{\CmvC} = \lmev{\CmvD}{\WmvA[\Adv] \mid \cstCi}{\CmvC}
\]
In particular, by choosing $\cstCi = \cstC \mid \cstD$ 
and $\CmvC = \cmvOfcst{\cstD}$,
this implies that:
\begin{align*}
    & \lmev{}{\WmvA[\Adv]\mid\WmvA\mid\cstC\mid\cstD}{\cmvOfcst{\cstD}} = \lmev{}{\WmvA[\Adv]\mid\cstC\mid\cstD}{\cmvOfcst{\cstD}} 
    \\
    & \lmev{\cmvOfcst{\cstD}}{\WmvA[\Adv]\mid\WmvA\mid\cstC\mid\cstD}{\cmvOfcst{\cstD}} = \lmev{\cmvOfcst{\cstD}}{\WmvA[\Adv]\mid\cstC\mid\cstD}{\cmvOfcst{\cstD}}
\end{align*}
which gives us our thesis, \ie
$
    \qnonint{\WmvA[\Adv] \mid \WmvA \mid \cstC}{\cstD} = \qnonint{\WmvA[\Adv] \mid \cstC}{\cstD}
$
\qed




\medskip
\Cref{th:lmev:contract-stripping} gives sufficient conditions under which we can strip $\CmvD$ from all the non-dependencies of $\CmvC$ while preserving $\lmev{\CmvD}{\sysS}{\CmvC}$.
Condition~\ref{th:lmev:contract-stripping:1} is that contract functions are sender-agnostic, \ie they are not aware of the identity of the $\sender$, being only able to use it as a recipient of token transfers.
Condition~\ref{th:lmev:contract-stripping:2} ensures that $\CmvD$ contains enough contracts to reproduce attacks in the stripped state.
Condition~\ref{th:lmev:contract-stripping:3} requires that the dependencies and the non-dependencies of $\CmvC$ in $\CmvD$ are token independent in $\sysS$.
In other words, there are no token dependencies between $\strip{\CmvD}{\CmvC}$ and $\CmvD \setminus \deps{\CmvC}$, which could have potentially be exploited by non-wealthy adversaries.


\begin{lemma}
\label{th:lmev:contract-stripping}
The equality:
\[
\lmev{\CmvD}{\sysS}{\CmvC} = \lmev{\strip{\CmvD}{\CmvC}}{\sysS}{\CmvC}
\]
holds if all the following conditions, 
where $\CmvCi = \deps{\CmvC} \cap \deps{\CmvD \setminus \deps{\CmvC}}$,
are satisfied:
\begin{enumerate}

\item \label{th:lmev:contract-stripping:1}
the contracts in $\CmvCi$ are sender-agnostic;
    
\item \label{th:lmev:contract-stripping:2}
$\CmvCi \subseteq \CmvD$;

\item \label{th:lmev:contract-stripping:3}
$\strip{\deps{\CmvD}}{\CmvC}$ and $\deps{\CmvD} \setminus \deps{\CmvC}$ are token independent in $\sysS$.

\end{enumerate}
\end{lemma}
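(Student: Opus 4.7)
The plan is to prove both inequalities separately. The $\geq$ direction is immediate: since $\strip{\CmvD}{\CmvC} = \CmvD \cap \deps{\CmvC} \subseteq \CmvD$, \Cref{lem:lmev:L-leq-H} of \Cref{lem:lmev} directly gives $\lmev{\strip{\CmvD}{\CmvC}}{\sysS}{\CmvC} \leq \lmev{\CmvD}{\sysS}{\CmvC}$. The substantial work lies in the opposite direction, which I would approach through a simulation argument.

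For the $\leq$ direction, I would fix a sequence $\TxTS \in \mall{\CmvD}{\Adv}^*$ witnessing $\lmev{\CmvD}{\sysS}{\CmvC}$ and construct a sequence $\TxTS' \in \mall{\strip{\CmvD}{\CmvC}}{\Adv}^*$ satisfying $-\gain{\CmvC}{\sysS}{\TxTS'} \geq -\gain{\CmvC}{\sysS}{\TxTS}$, by transforming $\TxTS$ transaction by transaction. Any transaction whose callee already lies in $\deps{\CmvC}$ belongs to $\mall{\strip{\CmvD}{\CmvC}}{\Adv}$ and would be retained verbatim. For a transaction $\TxT$ whose callee sits in $\CmvD \setminus \deps{\CmvC}$, I would unfold the execution tree of $\TxT$ and extract the maximal subtrees rooted at calls to contracts in $\CmvCi$: by the well-formedness of the call order $\sqsubseteq$, any transfer of control from a contract outside $\deps{\CmvC}$ into $\deps{\CmvC}$ must land on some contract in $\CmvCi = \deps{\CmvC} \cap \deps{\CmvD \setminus \deps{\CmvC}}$. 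Each such subtree would be replaced by a direct adversarial call $\pmvM$ makes to the corresponding contract in $\CmvCi$. This replacement is legal for the restricted attack because condition (\ref{th:lmev:contract-stripping:2}) gives $\CmvCi \subseteq \CmvD$, and hence $\CmvCi \subseteq \strip{\CmvD}{\CmvC}$.

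The correctness of the transformation rests on the two remaining conditions. Sender-agnosticism (condition (\ref{th:lmev:contract-stripping:1})) ensures that each extracted call to a contract in $\CmvCi$ induces the same update of contract states whether the caller is the original routing contract or the adversary $\pmvM$, modulo redirecting the outgoing tokens from the routing contract to $\pmvM$'s wallet. Since those redirected tokens land on $\pmvM$ and not on any contract of $\deps{\CmvC}$, this redirection does not affect $\wealth{\CmvC}{\cdot}$. Token independence (condition (\ref{th:lmev:contract-stripping:3})) then guarantees that the tokens required to fund these direct calls do not need to be routed through $\deps{\CmvD} \setminus \deps{\CmvC}$: since $\intok{\sysS}{\strip{\deps{\CmvD}}{\CmvC}} \cap \outtok{\sysS}{\deps{\CmvD} \setminus \deps{\CmvC}} = \emptyset$, any tokens that flowed into $\deps{\CmvC}$ during the original execution of $\TxT$ must have originated from adversary wallets or from accounts outside $\deps{\CmvD} \setminus \deps{\CmvC}$, and are therefore equally available when $\pmvM$ issues the direct calls. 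The symmetric independence condition rules out token flows out of $\deps{\CmvC}$ into $\deps{\CmvD} \setminus \deps{\CmvC}$ during $\TxT$, so discarding the remainder of $\TxT$'s execution (the part staying entirely outside $\deps{\CmvC}$) cannot improve the wealth of $\CmvC$ and so cannot reduce the loss achieved by $\TxTS'$.

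The main obstacle will be making the execution-tree decomposition and the associated inductive bookkeeping precise without committing to a concrete operational semantics. In particular, I would need an invariant showing that after processing the first $i$ transactions of $\TxTS$, the state reached by $\TxTS'$ agrees with the one reached by $\TxTS$ on the restriction to contracts in $\deps{\CmvC}$ and on all token balances that contribute to $\wealth{\CmvC}{\cdot}$, so that subsequent transactions of $\TxTS$ that do target $\deps{\CmvC}$ remain valid under the simulation and produce the same loss. Handling a single transaction that interleaves several entries into $\CmvCi$, and verifying that condition (\ref{th:lmev:contract-stripping:3}) continues to apply at every intermediate reachable state and not only at $\sysS$, is where most of the technical care will be required.
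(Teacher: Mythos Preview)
Your proposal is correct and follows essentially the same approach as the paper: both directions are handled identically (the easy one via \Cref{lem:lmev:L-leq-H} of \Cref{lem:lmev}), and for the hard direction you unfold the execution of each transaction in an optimal $\TxTS$, identify the boundary calls crossing from outside $\deps{\CmvC}$ into $\deps{\CmvC}$ (which land in $\CmvCi$), replace each by a direct adversarial call using condition~(\ref{th:lmev:contract-stripping:2}), invoke sender-agnosticism to preserve the effect on $\deps{\CmvC}$, and use token independence to show these replacement calls carry no token payload and hence need no extra funding. The paper linearises the call tree into a sequence $\vec{x}$ rather than speaking of subtrees, but the extracted subsequence and the resulting $\TxYS$ are exactly what your construction produces.
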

\begin{proof}
First, note that the inequality $\lmev{\strip{\CmvD}{\CmvC}}{\sysS}{\CmvC} \leq \lmev{\CmvD}{\sysS}{\CmvC}$ follows from \Cref{lem:lmev:L-leq-H} of \Cref{lem:lmev}, so we just need to show that:
\[
\lmev{\CmvD}{\sysS}{\CmvC} \leq \lmev{\strip{\CmvD}{\CmvC}}{\sysS}{\CmvC}
\]
To do so, let $\TxTS \in \mall{\CmvD}{\Adv}^*$ be a sequence of transactions that maximizes the loss of $\CmvC$ when executed in state $\sysS$. 
We show that there exists $\TxYS \in \mall{\strip{\CmvD}{\CmvC}}{\Adv}^*$ that causes a loss to $\CmvC$ equal to the one caused by $\TxTS$, \ie:
\begin{equation}
\label{eq:lmev:contract-stripping:Y}
\TxYS \in \mall{\strip{\CmvD}{\CmvC}}{\Adv}^*
\qquad
\gain{\CmvC}{\sysS}{\TxYS} = \gain{\CmvC}{\sysS}{\TxTS}
\end{equation}
\Wlog we assume that all the transactions in $\TxTS$ are valid:
indeed, invalid transactions in $\TxTS$ are reverted, so they can be removed without affecting the loss.

\medskip\noindent
Note that each transaction $\txT[i] = \pmvM[i]:\contract[i,1]{C}.\txcode{f_{i,1}}(\code{args_{i,1}})$ in $\TxTS$ can trigger a sequence of \emph{internal} contract-to-contract function calls:
\[
\contract[i,1]{C}:\contract[i,2]{C}.\txcode{f_{i,2}}(\code{args_{i,2}})
\;\;
\contract[i,2]{C}:\contract[i,3]{C}.\txcode{f_{i,3}}(\code{args_{i,3}})
\;\; \cdots \;\;
\contract[i,k-1]{C}:\contract[i,k]{C}.\txcode{f_{i,k}}(\code{args_{i,k}})
\]
Let $\vec{x}$ be the sequence of all function calls (either external or internal) that are performed upon the execution of $\TxTS$ in state $\sysS$.
To construct $\TxYS$, we start by considering
the subsequence $\vec{y}$ of $\vec{x}$ containing all and only the calls of the form:
\begin{enumerate}

\item[(a)] $\pmvM[i]:\contract[i,1]{C}.\txcode{f_{i,1}}(\code{args_{i,1}})$ where $\contract[i,1]{C} \in \deps{\CmvC}$, or

\item[(b)] $\contract[i,j-1]{C}:\contract[i,j]{C}.\txcode{f_{i,j}}(\code{args_{i,j}})$, where $\contract[i,j-1]{C} \not\in \deps{\CmvC}$ and  $\contract[i,j]{C} \in \deps{\CmvC}$.



\end{enumerate}

\noindent
\textbf{Claim (1).} If $\contract[i,j-1]{C}:\contract[i,j]{C}.\txcode{f_{i,j}}(\code{args_{i,j}}) \in \vec{y}$, 
then $\contract[i,j]{C} \in \CmvCi$.

\medskip\noindent
\emph{Proof of Claim (1).} 
By hypothesis, $\contract[i,j]{C} \in \deps{\CmvC}$. 
Let $\txT[i] \in \TxTS$ be the transaction that originated the call.
Since $\txT[i] \in \mall{\CmvD}{\Adv}$, then  
$\contract[i,1]{C} \in \CmvD$.
Since $\deps{\CmvC}$ is closed downward and 
$\contract[i,j-1]{C} \not\in \deps{\CmvC}$,
then $\contract[i,1]{C} \not\in \deps{\CmvC}$. 
So, $\contract[i,1]{C} \in \CmvD \setminus \deps{\CmvC}$, and therefore $\contract[i,j]{C} \in \deps{\CmvD \setminus \deps{\CmvC}}$.
This completes the proof of Claim (1).

\medskip
To describe the construction of $\TxYS$, let the meta-variables $\addr[i]{a}$ range over user and contract addresses, so to rewrite the sequence $\vec{y}$ as follows:
\begin{align*}
& \addr[1]{a}:\contract[1]{C}.\txcode{f_{1}}(\code{args_{1}})
\;\;
\addr[2]{a}:\contract[2]{C}.\txcode{f_{2}}(\code{args_{2}})
\;\; \cdots \;\;
\addr[n]{a}:\contract[n]{C}.\txcode{f_{n}}(\code{args_{n}})
\cdots
\end{align*}
We translate $\vec{y}$ into the sequence of \emph{transactions} $\TxYS$
by preserving the senders $\addr[i]{a}$ that are user accounts (\ie, $\addr[i]{a} = \pmvM[i]$),
and by replacing the $\addr[i]{a}$ that are contract accounts into the user account that originated the corresponding call.
Namely, if $\addr[i]{a} = \contract[i,j-1]{C}$ is a contract account corresponding to the following call in $\vec{y}$:
\[
\contract[i,j-1]{C}:\contract[i,j]{C}.\txcode{f_{i,j}}(\code{args_{i,j}})
\]
then the sender of the $i$-th transaction in $\TxYS$ is $\pmvM[i]$, \ie the originator of the call.
%
Note that each transaction $\txY[i]$ in $\TxYS$ can be funded by the adversary:
\begin{itemize}

\item if $\addr[i]{a} = \pmvM[i]$, then the fact that the corresponding transaction $\txT[i]$ in $\TxTS$ was valid implies that $\pmvM[i]$ has the tokens needed to fund the call; 

\item if $\addr[i]{a} = \contract[i,j-1]{C}$, then there is no token transfer from $\contract[i,j-1]{C}$ to $\contract[i,j]{C}$, and so $\txY[i]$ does not need to be funded.
This is because:
\begin{itemize}

\item $\contract[i,j-1]{C} \in \deps{\CmvD} \setminus \deps{\CmvC}$: 
indeed, $\contract[i,j-1]{C} \in \deps{\CmvD}$ since $\txT[i] \in \mall{\CmvD}{\Adv}$, and $\contract[i,j-1]{C} \not\in \deps{\CmvC}$ by definition of case (b);

\item $\contract[i,j]{C} \in \deps{\CmvD} \cap \deps{\CmvC}$: 
indeed, $\contract[i,j]{C} \in \deps{\CmvD}$ since $\txT[i] \in \mall{\CmvD}{\Adv}$, and $\contract[i]{C} \in \deps{\CmvC}$ by definition of case (b);

\item $\strip{\deps{\CmvD}}{\CmvC}$ and $\deps{\CmvD} \setminus \deps{\CmvC}$ are token independent in $\sysS$ by assumption~\eqref{th:lmev:contract-stripping:3}.

\end{itemize}

\end{itemize}

\noindent
\textbf{Claim (2).} $\TxYS \in \mall{\strip{\CmvD}{\CmvC}}{\Adv}^*$
    
\medskip\noindent
\emph{Proof of Claim (2).} 
Consider a transaction $\txY[i]$ in $\TxYS$.
We have two cases, depending on whether $\txY[i]$ is due to conditions (a) or (b):
\begin{enumerate}

\item[(a)] in this case, $\txY[i]$ corresponds to some
$\txT[i] = \pmvM[i]:\contract[i,1]{C}.\txcode{f_{i,1}}(\code{args_{i,1}})$ in $\TxTS$ where $\contract[i,1]{C} \in \deps{\CmvC}$.
Since $\txT[i] \in \mall{\CmvD}{\Adv}$, then $\txY[i] \in \mall{\strip{\CmvD}{\CmvC}}{\Adv}$. 

\item[(b)] by Claim (1), the callee of $\txY[i]$ is in $\CmvCi = \deps{\CmvC} \cap \deps{\CmvD \setminus \deps{\CmvC}}$, which is included in $\CmvD$ by assumption~\ref{th:lmev:contract-stripping:2}. Note that $\Adv$ is able to craft the actual arguments of that call by simulating the execution of $\TxTS$. This implies that  $\txY[i] \in \mall{\strip{\CmvD}{\CmvC}}{\Adv}$.
This completes the proof of Claim (2).

\end{enumerate}

We now show that $\TxYS$ and $\TxTS$ modify the state of contracts in $\CmvC$ in exactly the same way. 
Note that the transactions $\txY[i]$ that are in $\TxYS$ due to condition (b) have callee in $\CmvCi$ by Claim (1), and so their functions
are \emph{sender-agnostic} by assumption~\ref{th:lmev:contract-stripping:1}. 
So, the fact that in the execution of $\txY[i]$ they are called directly from a user address, while in the execution of $\txT[i]$ they are called from a contract address, does not affect the execution of these calls. 
Note that a call $\contract[i,j-1]{C}:\contract[i,j]{C}.\txcode{f_{i,j}}(\code{args_{i,j}})$ in $\txT[i]$ could send tokens to the sender $\contract[i,j-1]{C}$, thus affecting its gain, while the corresponding call $\pmvM[i]:\contract[i,j]{C}.\txcode{f_{i,j}}(\code{args_{i,j}})$ would send these tokens to $\pmvM[i]$.
This difference however do not affect the gains and losses of $\CmvC$, since $\contract[i,j-1]{C}$ is not in $\deps{\CmvC}$ by condition (b).

Note that the sequence $\vec{h}$ of calls performed upon the execution of $\TxYS$ contains $\vec{y}$ but does not coincide with it, since it also includes all the internal calls that are performed by functions in $\vec{y}$.
\emnote{I think the previous sentence is not true. This is because I do not think $\vec{h}$ contains $\vec{y}$: $\vec{y}$ (subsequence of $\vec{x}$) contains calls from a non-dependencies of $\CmvC$ to $\deps{\CmvC}$, whereas in $\vec{h}$ we replace the sender (non-dependency of $\CmvC$) with the adversarial account that originated the corresponding call. So $\vec{h}$ does not contain $y$. But the following sentence seems okay. Please let me know if my query is clear or whether I should re-write it in formal language.}
In fact, $\vec{h}$ is the subsequence of $\vec{x}$ that contains every call to functions of contracts in $\deps{\CmvC}$. For this reason, both $\vec{x}$ and $\vec{h}$ modify the state of contracts 
$\deps{\CmvC}$ in the same way --- and, in particular, they cause exactly the same losses to the contracts in $\CmvC$.
This implies that $\TxYS$ is valid in $\sysS$ and that $\gain{\CmvC}{\sysS}{\TxYS} = \gain{\CmvC}{\sysS}{\TxTS}$. 
Since we have proved~\eqref{eq:lmev:contract-stripping:Y} for all possible $\TxTS$, we obtain the thesis. 
\qed
\end{proof}

\begin{example}
\label{ex:lmev:contract-stripping}
To illustrate~\Cref{th:lmev:contract-stripping}, consider the contracts: 

\begin{lstlisting}[
,language=txscript
,basicstyle=\fontseries{m}\normalsize\ttfamily\lst@ifdisplaystyle\footnotesize\fi,
,morekeywords={f,g,f0,f1,g0,g1},classoffset=4
,morekeywords={a,A,M},keywordstyle=\pmvColor
,classoffset=5,morekeywords={t,T,T0,T1,T2,ETH}
,keywordstyle=\tokColor,classoffset=6
,morekeywords={C0,C1,C2,C3},keywordstyle=\cmvColor
%,caption={Illustration of~\Cref{th:lmev:contract-stripping}}
%,label={lst:cex:lmev:contract-stripping}
]
contract C0 { f(a pays 1:T) { transfer(M,2:T) } }
contract C1 { f() { C0.f(C1 pays 1:T); } }
contract C2 { f() { require sender==C3; C1.f(); } }
contract C3 { f() { C2.f(); C1.f(); } }
\end{lstlisting}

\noindent
Let 
$\Adv = \setenum{\pmvM}$, 
$\CmvD= \setenum{\contract{C0},\contract{C1},\contract{C3}}$,
$\CmvC = \setenum{\contract{C0,C1}}$, 
and let:
\[
\sysS = 
\walpmv{\pmvM}{0:\tokT} \mid
\walpmv{\contract{C0}}{\waltok{2}{\tokT}} \mid 
\walpmv{\contract{C1}}{\waltok{2}{\tokT}} \mid 
\walpmv{\contract{C2}}{\waltok{0}{\tokT}} \mid
\walpmv{\contract{C3}}{\waltok{0}{\tokT}}
\]
Let $\TxTS \in \mall{\CmvD}{\pmvM}$ be the following sequence of transactions:
\[
\TxTS = \pmvM:\contract{C3}.\txcode{f()}
\]
By executing $\TxTS$ in $\sysS$, we have that:
\begin{align*}
    \sysS
    & \xrightarrow{\pmvM:\contract{C3}.\txcode{f()}}
    \walpmv{\pmvM}{4:\tokT} \mid
    \walpmv{\contract{C0}}{\waltok{0}{\tokT}} \mid 
    \walpmv{\contract{C1}}{\waltok{0}{\tokT}} \mid 
    \walpmv{\contract{C2}}{\waltok{0}{\tokT}} \mid
    \walpmv{\contract{C3}}{\waltok{0}{\tokT}}
\end{align*}
Since there are no tokens left in $\CmvC$, $\TxTS$ clearly maximises the loss of $\CmvC$, hence:
\[
\lmev{\CmvD}{\sysS}{\CmvC} = 4 \cdot \price{\tokT}
\]
We first check that the conditions of~\Cref{th:lmev:contract-stripping}
are satisfied.
Let:
\begin{align*}
    \CmvCi 
    & = \deps{\CmvC} \cap \deps{\CmvD \setminus \deps{\CmvC}}
    = \setenum{\contract{C0},\contract{C1}} \cap
    \deps{\setenum{\contract{C0},\contract{C1},\contract{C3}} \setminus \setenum{\contract{C0},\contract{C1}}} 
    \\
    & = \setenum{\contract{C0},\contract{C1}} \cap \deps{\setenum{\contract{C3}}}
     = \setenum{\contract{C0},\contract{C1}} \cap \setenum{\contract{C0},\contract{C1},\contract{C2},\contract{C3}}
     \\
     & = \setenum{\contract{C0},\contract{C1}}
\end{align*}
The conditions of~\Cref{th:lmev:contract-stripping} are then satisfied, since:
\begin{itemize}

\item[\eqref{th:lmev:contract-stripping:1}]
the contracts $\contract{C0},\contract{C1} \in \CmvCi$ is sender-agnostic.
Note that $\contract{C2}$ is not sender-agnostic, but this does not violate assumption~\ref{th:lmev:contract-stripping:1} since sender-agnosticism is only required on $\CmvCi$;

\item[\eqref{th:lmev:contract-stripping:2}]
$\CmvCi = \setenum{\contract{C0},\contract{C1}} \subseteq \CmvD$.
Note that this inclusion is stricter than necessary: indeed, in this example, choosing $\CmvD = \setenum{\contract{C1},\contract{C3}}$ would have violated assumption~\ref{th:lmev:contract-stripping:2}, but it would have still preserved the MEV (see $\TxYS$ below).

\item[\eqref{th:lmev:contract-stripping:3}]
token independence of the parts of $\sysS$ related to contracts:
\begin{align*}
\deps{\CmvD} \cap \deps{\CmvC} 
& = 
\setenum{\contract{C0},\contract{C1},\contract{C2},\contract{C3}} \cap \setenum{\contract{C0},\contract{C1}} 
= \setenum{\contract{C0},\contract{C1}}
\\
\deps{\CmvD} \setminus \deps{\CmvC} 
& = \setenum{\contract{C0},\contract{C1},\contract{C2},\contract{C3}} \setminus \setenum{\contract{C0},\contract{C1}} 
= \setenum{\contract{C2},\contract{C3}} 
\end{align*}
Note instead that token independence is not required between $\contract{C0}$ and $\contract{C1}$: actually, these two contracts are token dependent, since $\contract{C1}$ sends $1:\tokT$ along with the internal call to $\contract{C0}$.
\end{itemize}

\noindent
We now construct the sequence of transactions $\TxYS$ following the proof of \Cref{th:lmev:contract-stripping}.
The sequence $\vec{x}$ of calls induced by $\TxTS$, the subsequence $\vec{y}$ obtained by filtering $\vec{x}$, and the sequence of transactions $\TxYS$ are the following:
\[
\begin{array}{lllllll}
\vec{x} \; =
\quad
& \pmvM:\contract{C3}.\txcode{f()}
\quad
& \contract{C3}:\contract{C2}.\txcode{f()}
\quad
& \contract{C2}:\contract{C1}.\txcode{f()}
\quad
& \contract{C1}:\contract{C0}.\txcode{f()}
\quad
& \contract{C3}:\contract{C1}.\txcode{f()}
\quad
& \contract{C1}:\contract{C0}.\txcode{f()}
\\
\vec{y} \; =
\quad
& 
& 
& \contract{C2}:\contract{C1}.\txcode{f()}
\quad
& 
\quad
& \contract{C3}:\contract{C1}.\txcode{f()}
\quad
& 
\\
\TxYS  =
& 
& 
& \pmvM:\contract{C1}.\txcode{f()}
\quad
&
& \pmvM:\contract{C1}.\txcode{f()}
&
\end{array}
\]
Note that $\TxYS \in \mall{\CmvD \cap \deps{\CmvC}}{\pmvM} = \mall{\setenum{\contract{C0},\contract{C1}}}{\pmvM}$. 
By executing $\TxYS$ in $\sysS$, we have that:
\begin{align*}
    \sysS
    & \xrightarrow{\pmvM:\contract{C1}.\txcode{f()}}
    \walpmv{\pmvM}{2:\tokT} \mid
    \walpmv{\contract{C0}}{\waltok{1}{\tokT}} \mid 
    \walpmv{\contract{C1}}{\waltok{1}{\tokT}} \mid 
    \walpmv{\contract{C2}}{\waltok{0}{\tokT}} \mid
    \walpmv{\contract{C3}}{\waltok{0}{\tokT}}
    \\
    & \xrightarrow{\pmvM:\contract{C1}.\txcode{f()}}
    \walpmv{\pmvM}{4:\tokT} \mid
    \walpmv{\contract{C0}}{\waltok{0}{\tokT}} \mid 
    \walpmv{\contract{C1}}{\waltok{0}{\tokT}} \mid 
    \walpmv{\contract{C2}}{\waltok{0}{\tokT}} \mid
    \walpmv{\contract{C3}}{\waltok{0}{\tokT}}
\end{align*}
Hence, we have that:
\[
\lmev{\CmvD \cap \deps{\CmvC}}{\sysS}{\CmvC} = 4 \cdot \price{\tokT}
\]
which confirms the preservation of MEV stated by~\Cref{th:lmev:contract-stripping}.
\hfill\qedex
\end{example}


\begin{example}
\label{ex:lmev:contract-stripping:token-independence}
To illustrate the need of the token independence assumption in \Cref{th:lmev:contract-stripping}, consider the contracts: 

\begin{lstlisting}[
,language=txscript
,basicstyle=\fontseries{m}\normalsize\ttfamily\lst@ifdisplaystyle\footnotesize\fi,
,morekeywords={f,g,f0,f1,g0,g1,receive},classoffset=4
,morekeywords={a,A,M},keywordstyle=\pmvColor
,classoffset=5,morekeywords={t,T,T0,T1,T2,ETH}
,keywordstyle=\tokColor,classoffset=6
,morekeywords={C0,C1,C2,C3},keywordstyle=\cmvColor
%,caption={Illustration of~\Cref{th:lmev:contract-stripping}}
%,label={lst:cex:lmev:contract-stripping}
]
contract C0 { f(a pays 2:T) { transfer(M,4:T) } }
contract C1 { 
    f() { C0.f(C1 pay 2:T); }
    receive(a pays n:T) { }
}
contract C2 { 
  f() { C1.receive(C2 pays 1:T); }
     // transfer(C1,1:T) is forbidden in our model
     // transfer recipients must be user accounts
}
contract C3 { f() { C2.f(); C1.f(); } }
\end{lstlisting}

\noindent
Let 
$\Adv = \setenum{\pmvM}$, 
$\CmvC = \setenum{\contract{C0},\contract{C1}}$, 
$\CmvD= \setenum{\contract{C0},\contract{C1},\contract{C3}}$, and let:
\[
\sysS = 
\walpmv{\pmvM}{0:\tokT} \mid
\walpmv{\contract{C0}}{\waltok{2}{\tokT}} \mid 
\walpmv{\contract{C1}}{\waltok{1}{\tokT}} \mid 
\walpmv{\contract{C2}}{\waltok{1}{\tokT}} \mid
\walpmv{\contract{C3}}{\waltok{0}{\tokT}}
\]
Note that $\pmvM$ has no tokens in $\sysS$, to the only way to extract MEV is to pass through $\contract{C3}$.
Let $\TxTS \in \mall{\CmvD}{\pmvM}$ be the following sequence of transactions:
\[
\TxTS = \pmvM:\contract{C3}.\txcode{f()}
\]
By executing $\TxTS$ in $\sysS$, we have that:
\begin{align*}
    \sysS
    & \xrightarrow{\pmvM:\contract{C3}.\txcode{f()}}
    \walpmv{\pmvM}{4:\tokT} \mid
    \walpmv{\contract{C0}}{\waltok{0}{\tokT}} \mid 
    \walpmv{\contract{C1}}{\waltok{0}{\tokT}} \mid 
    \walpmv{\contract{C2}}{\waltok{0}{\tokT}} \mid
    \walpmv{\contract{C3}}{\waltok{0}{\tokT}}
\end{align*}
Since there are no tokens left in $\CmvC$, $\TxTS$ clearly maximises the loss of $\CmvC$.
Since $\CmvC$ contained $3:\tokT$ in $\sysS$, then:
\[
\lmev{\CmvD}{\sysS}{\CmvC} = 3 \cdot \price{\tokT}
\]
Similarly to~\Cref{ex:lmev:contract-stripping}, we have that $\CmvCi = \setenum{\contract{C0},\contract{C1}}$, which satisfies conditions \eqref{th:lmev:contract-stripping:1} and \eqref{th:lmev:contract-stripping:2}. 
For condition \eqref{th:lmev:contract-stripping:3} (token independence), we have that:
\begin{align*}
\deps{\CmvD} \cap \deps{\CmvC} 
& = 
\setenum{\contract{C0},\contract{C1},\contract{C2},\contract{C3}} \cap \setenum{\contract{C0},\contract{C1}} 
= \setenum{\contract{C0},\contract{C1}}
\\
\deps{\CmvD} \setminus \deps{\CmvC} 
& = \setenum{\contract{C0},\contract{C1},\contract{C2},\contract{C3}} \setminus \setenum{\contract{C0},\contract{C1}} 
= \setenum{\contract{C2},\contract{C3}} 
\end{align*}
Now, token independence between 
$\setenum{\contract{C0},\contract{C1}}$
and
$\setenum{\contract{C2},\contract{C3}}$
is not satisfied, since $\contract{C2}$ sends $1:\tokT$ to $\contract{C1}$.
More formally, we have that:
\[
\intok{\sysS}{\cstC[\contract{C0},\contract{C1}]} = \setenum{\tokT}
\quad
\outtok{\sysS}{\cstC[\contract{C2},\contract{C3}]} = \setenum{\tokT}
\quad
\intok{\sysS}{\cstC[\contract{C2},\contract{C3}]} = \emptyset
\quad
\outtok{\sysS}{\cstC[\contract{C0},\contract{C1}]} = \setenum{\tokT}
\]
By~\Cref{def:token-independence}, since:
\[
\intok{\sysS}{\cstC[\contract{C0},\contract{C1}]} \cap  
\outtok{\sysS}{\cstC[\contract{C2},\contract{C3}]} 
= \setenum{\tokT} \cap \setenum{\tokT}
\neq \emptyset
\]
then, $\cstC[\contract{C0},\contract{C1}]$ and $\cstC[\contract{C2},\contract{C3}]$
are \emph{not} token independent.

\medskip\noindent
Since the conditions of~\Cref{th:lmev:contract-stripping} are \emph{not} satisfied, we are not guaranteed to have the preservation of MEV:
\[
\lmev{\CmvD \cap \deps{\CmvC}}{\sysS}{\CmvC}
\; \stackrel{?}{=} \;
\lmev{\CmvD}{\sysS}{\CmvC} = 3 \cdot \price{\tokT}
\]
Indeed, the maximum loss that $\Adv$ can inflict to $\CmvC$ using $\mall{\CmvD \cap \deps{\CmvC}}{\Adv} = \mall{\setenum{\contract{C0},\contract{C1}}}{\Adv}$ is zero.
This is because:
\begin{itemize}
\item calling $\contract{C0}$ fails, since $\contract{C0}$ has not the required $4:\tokT$ to transfer;
\item calling $\contract{C1}$ fails, since $\contract{C1}$ does not have the $2:\tokT$ required to call $\contract{C0}$. 
\end{itemize} 
Note also that requiring the milder condition that
$\CmvD \cap \deps{\CmvC}$ and $\CmvD \setminus \deps{\CmvC}$ are token independent would not be enough to guarantee MEV preservation.
In our example, we would have:
\begin{align*}
\CmvD \cap \deps{\CmvC} 
& = 
\setenum{\contract{C0},\contract{C1},\contract{C3}} \cap \setenum{\contract{C0},\contract{C1}} 
= \setenum{\contract{C0},\contract{C1}}
\\
\CmvD \setminus \deps{\CmvC} 
& = \setenum{\contract{C0},\contract{C1},\contract{C3}} \setminus \setenum{\contract{C0},\contract{C1}} 
= \setenum{\contract{C3}} 
\end{align*}
where $\cstC[\contract{C0},\contract{C1}]$ and $\cstC[\contract{C3}]$
are token independent.
\hfill\qedex
\end{example}

\begin{example}
\label{ex:lmev:contract-stripping:sender-agnostic}
To illustrate the need of the sender-agnosticism assumption in \Cref{th:lmev:contract-stripping}, consider the contracts: 

\begin{lstlisting}[
,language=txscript
,basicstyle=\fontseries{m}\normalsize\ttfamily\lst@ifdisplaystyle\footnotesize\fi,
,morekeywords={f,g,f0,f1,g0,g1},classoffset=4
,morekeywords={a,A,M},keywordstyle=\pmvColor
,classoffset=5,morekeywords={t,T,T0,T1,T2,ETH}
,keywordstyle=\tokColor,classoffset=6
,morekeywords={C0,C1,C2,C3},keywordstyle=\cmvColor
%,caption={Illustration of~\Cref{th:lmev:contract-stripping}}
%,label={lst:cex:lmev:contract-stripping}
]
contract C0 { f() { require sender==C1; transfer(M,1:T) }}
contract C1 { f() { C0.f(); } }
\end{lstlisting}

\noindent
Let 
$\Adv = \setenum{\pmvM}$, 
$\CmvC = \setenum{\contract{C0}}$, 
$\CmvD= \setenum{\contract{C0},\contract{C1}}$, and let:
\[
\sysS = 
\walpmv{\pmvM}{0:\tokT} \mid
\walpmv{\contract{C0}}{\waltok{1}{\tokT}} \mid 
\walpmv{\contract{C1}}{\waltok{0}{\tokT}}
\]
Let $\TxTS \in \mall{\CmvD}{\pmvM}$ be the following sequence of transactions:
\[
\TxTS = \pmvM:\contract{C1}.\txcode{f()}
\]
By executing $\TxTS$ in $\sysS$, we have that:
\begin{align*}
    \sysS
    & \xrightarrow{\pmvM:\contract{C1}.\txcode{f()}}
    \walpmv{\pmvM}{1:\tokT} \mid
    \walpmv{\contract{C0}}{\waltok{0}{\tokT}} \mid 
    \walpmv{\contract{C1}}{\waltok{0}{\tokT}}
\end{align*}
Since there are no tokens left in $\CmvC$, $\TxTS$ clearly maximises the loss of $\CmvC$, hence:
\[
\lmev{\CmvD}{\sysS}{\CmvC} = 1 \cdot \price{\tokT}
\]
We have that $\CmvCi = \setenum{\contract{C0}}$, which satisfies condition \eqref{th:lmev:contract-stripping:2}. 
Note that condition \eqref{th:lmev:contract-stripping:3} (token independence) is trivially satisfied, since there are no token transfers among the contracts.
Instead, the contract $\contract{C0} \in \CmvCi$ is not sender-agnostic, thus violating condition~\eqref{th:lmev:contract-stripping:1}.
Indeed, MEV preservation does not hold, since:
\[
\lmev{\CmvD \cap \deps{\CmvC}}{\sysS}{\CmvC} 
= 
\lmev{\setenum{\contract{C0}}}{\sysS}{\setenum{\contract{C0}}} 
=
0
\]
This is because the adversary is restricted to calling $\contract{C0}$, but the transaction would revert since the $\txcode{require}$ condition in $\contract{C0}$ is violated.
\hfill\qedex
\end{example}

\begin{example}
\label{ex:lmev:contract-stripping:inclusion}
To illustrate the need of the assumption \eqref{th:lmev:contract-stripping:2} in \Cref{th:lmev:contract-stripping}, consider the contracts: 

\begin{lstlisting}[
,language=txscript
,basicstyle=\fontseries{m}\normalsize\ttfamily\lst@ifdisplaystyle\footnotesize\fi,
,morekeywords={f,g,f0,f1,g0,g1},classoffset=4
,morekeywords={a,A,M},keywordstyle=\pmvColor
,classoffset=5,morekeywords={t,T,T0,T1,T2,ETH}
,keywordstyle=\tokColor,classoffset=6
,morekeywords={C0,C1,C2,C3},keywordstyle=\cmvColor
%,caption={Illustration of~\Cref{th:lmev:contract-stripping}}
%,label={lst:cex:lmev:contract-stripping}
]
contract C0 { f() { transfer(M,1:T) } }
contract C1 { f() { C0.f(); } }
\end{lstlisting}

\noindent
Let 
$\Adv = \setenum{\pmvM}$, 
$\CmvC = \setenum{\contract{C0}}$, 
$\CmvD= \setenum{\contract{C1}}$, and let:
\[
\sysS = 
\walpmv{\pmvM}{0:\tokT} \mid
\walpmv{\contract{C0}}{\waltok{1}{\tokT}} \mid 
\walpmv{\contract{C1}}{\waltok{0}{\tokT}}
\]
Let $\TxTS \in \mall{\CmvD}{\pmvM}$ be the following sequence of transactions:
\[
\TxTS = \pmvM:\contract{C1}.\txcode{f()}
\]
By executing $\TxTS$ in $\sysS$, we have that:
\begin{align*}
    \sysS
    & \xrightarrow{\pmvM:\contract{C1}.\txcode{f()}}
    \walpmv{\pmvM}{1:\tokT} \mid
    \walpmv{\contract{C0}}{\waltok{0}{\tokT}} \mid 
    \walpmv{\contract{C1}}{\waltok{0}{\tokT}}
\end{align*}
Since there are no tokens left in $\CmvC$, $\TxTS$ clearly maximises the loss of $\CmvC$, hence:
\[
\lmev{\CmvD}{\sysS}{\CmvC} = 1 \cdot \price{\tokT}
\]
We have that $\CmvCi = \deps{\setenum{\contract{C0}}} \cap \deps{\setenum{\contract{C1}} \setminus \deps{\setenum{\contract{C0}}}} = \setenum{\contract{C0}} \not\subseteq \CmvD$, thus violating condition \eqref{th:lmev:contract-stripping:2}. 
We have that:
\[
\lmev{\CmvD \cap \deps{\CmvC}}{\sysS}{\CmvC} 
= 
\lmev{\setenum{\contract{C1}} \cap \setenum{\contract{C0}}}{\sysS}{\setenum{\contract{C0}}} 
= 
\lmev{\emptyset}{\sysS}{\setenum{\contract{C0}}} 
=
0
\]
Therefore, MEV preservation does not hold.
\hfill\qedex
\end{example}

\proofof{th:qnonint:preserving-interference}
We show the following two equalities, which imply the thesis:
\begin{align}
    \label{eq:qnonint:preserving-interference:1}
    \lmev{}{\sysS\mid\cstD}{\cmvOfcst{\cstD}} &= \lmev{}{\sysS \mid \cstC[\Adv]\mid\cstD}{\cmvOfcst{\cstD}}
    \\
    \label{eq:qnonint:preserving-interference:2}
    \lmev{\cmvOfcst{\cstD}}{\sysS\mid\cstD}{\cmvOfcst{\cstD}} &= \lmev{\cmvOfcst{\cstD}}{\sysS \mid \cstC[\Adv]\mid\cstD}{\cmvOfcst{\cstD}}
\end{align}
Observe that~\eqref{eq:qnonint:preserving-interference:2} follows directly from~\Cref{lem:lmev:prepending-state},
since $\sysS \mid \cstD \preceq \sysS \mid \cstC[\Adv] \mid \cstD$
and $\cmvOfcst{\cstD} \subseteq \cmvOfcst{(\sysS \mid \cstD)} = \cmvOfcst{\sysS} \cup \cmvOfcst{\cstD}$.
Note instead that~\eqref{eq:qnonint:preserving-interference:1} does not follow from~\Cref{lem:lmev:prepending-state}, since 
to equate
$\lmev{\cmvOfcst{(\sysS \mid \cstC[\Adv]\mid\cstD)}}{\sysS\mid\cstD}{\cmvOfcst{\cstD}}$ and $\lmev{\cmvOfcst{(\sysS \mid \cstC[\Adv]\mid\cstD)}}{\sysS \mid \cstC[\Adv]\mid\cstD}{\cmvOfcst{\cstD}}$,
the~\namecref{lem:lmev:prepending-state} would require 
$\cmvOfcst{(\sysS \mid \cstC[\Adv]\mid\cstD)} \subseteq \cmvOfcst{(\sysS \mid \cstD)}$, which is false.

%
\medskip\noindent
In order to prove~\eqref{eq:qnonint:preserving-interference:1}, 
we pass through two auxiliary results.
We start by proving the following equality:
\begin{align}
\label{eq:qnonint:preserving-interference:4}
\lmev
{\cmvOfcst{(\sysS \mid \cstC[\Adv] \mid \cstD)}}
{\sysS \mid \cstC[\Adv] \mid \cstD}{\cmvOfcst{\cstD}}
=
\lmev
{\cmvOfcst{(\sysS\mid\cstD)} \cap \deps{\cmvOfcst{\cstD}}}
{\sysS \mid \cstC[\Adv]\mid\cstD}{\cmvOfcst{\cstD}}
\end{align}
In order to apply~\Cref{th:lmev:contract-stripping}, let:
\[
\CmvC = \cmvOfcst{\cstD}
\qquad
\CmvD = \cmvOfcst{(\sysS \mid \cstC[\Adv] \mid \cstD)}
\]
and let:
\begin{align*}
\CmvCi 
& = \deps{\CmvC} \cap \deps{\CmvD \setminus \deps{\CmvC}}
\\
& = \deps{\cstD} \cap \deps{\cmvOfcst{(\sysS \mid \cstC[\Adv] \mid \cstD) \setminus \deps{\cstD}}}
\end{align*}
Note that the conditions of~\Cref{th:lmev:contract-stripping} are satisfied:
\begin{itemize}

\item[\eqref{th:lmev:contract-stripping:1}] $\CmvCi$ are sender-agnostic, since $\CmvCi \subseteq \deps{\cstD}$ and, by assumption of~\Cref{th:qnonint:preserving-interference}, the contracts in $\deps{\cstD}$ are sender-agnostic;

\item[\eqref{th:lmev:contract-stripping:2}] $\CmvCi \subseteq \CmvD$ holds since
$\CmvCi \subseteq \deps{\cstD} \subseteq \CmvD$;

\item[\eqref{th:lmev:contract-stripping:3}]
Since the state $\sysS \mid \cstD$ is well-formed by assumption, then $\deps{\cstD} \subseteq \cmvOfcst{(\sysS \mid \cstD)}$, and so we have that:
\begin{align*}
\deps{\CmvD} \cap \deps{\CmvC}
& = \deps{\sysS \mid \cstC[\Adv] \mid \cstD} \cap \deps{\cstD}
= \deps{\cstD}
\\
\deps{\CmvD} \setminus \deps{\CmvC}
& = \deps{\sysS \mid \cstC[\Adv] \mid \cstD} \setminus \deps{\cstD}
= \deps{\sysS \mid \cstC[\Adv]} \setminus \deps{\cstD}
\\
& \subseteq \cmvOfcst{(\sysS \mid \cstC[\Adv])} \setminus \deps{\cstD}
\end{align*}
Since $\sysS \mid \cstC[\Adv] \mid \cstD$ is well-formed and $\deps{\cstD}$ and $\cmvOfcst{(\sysS \mid \cstC[\Adv])} \setminus \deps{\cstD}$ are disjoint, then Condition~\ref{condition:qnonint:preserving-interference:2} of~\Cref{th:qnonint:preserving-interference} ensures that these sets are token independent.

\end{itemize}
Therefore, by~\Cref{th:lmev:contract-stripping} it follows that:
\begin{align*}
\lmev
{\cmvOfcst{(\sysS \mid \cstC[\Adv] \mid \cstD)}}
{\sysS \mid \cstC[\Adv] \mid \cstD}{\cmvOfcst{\cstD}}
=
\lmev
{\cmvOfcst{(\sysS \mid \cstC[\Adv] \mid \cstD)} \cap \deps{\cmvOfcst{\cstD}}}
{\sysS \mid \cstC[\Adv]\mid\cstD}{\cmvOfcst{\cstD}}
\end{align*}
To obtain~\eqref{eq:qnonint:preserving-interference:4}, just note that,
since $\deps{\cmvOfcst{\cstD}} \subseteq \cmvOfcst{(\sysS \mid \cstD)}$:
\[
\cmvOfcst{(\sysS \mid \cstC[\Adv] \mid \cstD)} \cap \deps{\cmvOfcst{\cstD}}
=
\cmvOfcst{(\sysS \mid \cstD)} \cap \deps{\cmvOfcst{\cstD}}
\]
which is equal to $\deps{\cmvOfcst{\cstD}}$.

\medskip\noindent
The second auxiliary result is the equality:
\begin{align}
    \label{eq:qnonint:preserving-interference:5}
    \lmev{\cmvOfcst{(\sysS\mid\cstD)}}{\sysS \mid \cstC[\Adv]\mid\cstD}{\cmvOfcst{\cstD}}
    =
    \lmev{\strip{\cmvOfcst{(\sysS \mid \cstD)}}{\cmvOfcst{\cstD}}}{\sysS \mid \cstC[\Adv]\mid\cstD}{\cmvOfcst{\cstD}}
\end{align}

\noindent
This time, in order to apply~\Cref{th:lmev:contract-stripping} we let: 
\[
\CmvC = \cmvOfcst{\cstD}
\qquad
\CmvD = \cmvOfcst{(\sysS\mid\cstD)}
\qquad
\CmvCi = \deps{\cstD} \cap \deps{\CmvD \setminus \deps{\CmvC}}
\]
\noindent
In order to apply~\Cref{th:lmev:contract-stripping}, let us first compute:
\begin{align*}
\CmvCi 
& = \deps{\CmvC} \cap \deps{\CmvD \setminus \deps{\CmvC}}
\\
& = \deps{\cstD} \cap \deps{\cmvOfcst{(\sysS \mid \cstD) \setminus \deps{\cstD}}}
\end{align*}
Again, note that the assumptions of~\Cref{th:lmev:contract-stripping} are satisfied:
\begin{itemize}

\item[\eqref{th:lmev:contract-stripping:1}] $\CmvCi$ are sender-agnostic, since $\CmvCi \subseteq \deps{\cstD}$ and assumption~\ref{condition:qnonint:preserving-interference:1};

\item[\eqref{th:lmev:contract-stripping:2}] $\CmvCi \subseteq \CmvD$ holds since $\CmvCi \subseteq \deps{\cstD} \subseteq \CmvD$;

\item[\eqref{th:lmev:contract-stripping:3}]
Since the state $\sysS \mid \cstD$ is well-formed by assumption, then $\deps{\cstD} \subseteq \cmvOfcst{(\sysS \mid \cstD)}$, and so we have that:
\begin{align*}
\deps{\CmvD} \cap \deps{\CmvC}
& = \deps{\sysS \mid \cstD} \cap \deps{\cstD}
= \deps{\cstD}
\\
\deps{\CmvD} \setminus \deps{\CmvC}
& = \deps{\sysS \mid \cstD} \setminus \deps{\cstD}
= \deps{\sysS} \setminus \deps{\cstD}
\end{align*}
Condition~\ref{condition:qnonint:preserving-interference:2} ensures that 
these sets are token independent.


\end{itemize}

\noindent
Therefore, \Cref{th:lmev:contract-stripping} gives the equality~\eqref{eq:qnonint:preserving-interference:4}.

\medskip\noindent
Now, by putting together \eqref{eq:qnonint:preserving-interference:4}
and \eqref{eq:qnonint:preserving-interference:5}, we obtain:
\begin{equation}
    \label{eq:qnonint:preserving-interference:6}
    \lmev{\cmvOfcst{(\sysS \mid \cstC[\Adv] \mid \cstD)}}{\sysS \mid \cstC[\Adv]\mid\cstD}{\cmvOfcst{\cstD}}
    =
    \lmev{\cmvOfcst{(\sysS \mid \cstD)}}{\sysS \mid \cstC[\Adv]\mid\cstD}{\cmvOfcst{\cstD}}
\end{equation}

\medskip\noindent
Now we can prove~\eqref{eq:qnonint:preserving-interference:1} by observing the following chain of equalities:
\begin{center}
\begin{align*}
    &\lmev{\cmvOfcst{(\sysS\mid\cstD)}}{\sysS \mid \cstC[\Adv]\mid\cstD}{\cmvOfcst{\cstD}}
    \xlongequal{\text{by~\eqref{eq:qnonint:preserving-interference:6}}} \lmev{\cmvOfcst{(\sysS \mid \cstC[\Adv]\mid\cstD)}}{\sysS \mid \cstC[\Adv]\mid\cstD}{\cmvOfcst{\cstD}}
    \\
    & \hspace{15mm} \biggm\Vert \text{by \Cref{lem:lmev:prepending-state}} \hspace{34.5mm} \biggm\Vert &&
    \\
    & \lmev{\cmvOfcst{(\sysS\mid\cstD)}}{\sysS\mid\cstD}{\cmvOfcst{\cstD}}
    \hspace{23mm} \lmev{\cmvOfcst{(\sysS \mid \cstC[\Adv]\mid\cstD)}}{\sysS \mid \cstC[\Adv]\mid\cstD}{\cmvOfcst{\cstD}}
    \\
    & \hspace{15mm} 
    \biggm\Vert \text{by~\Cref{lem:lmev}\eqref{lem:lmev:garbage}} \hspace{30mm} \biggm\Vert \text{by~\Cref{lem:lmev}\eqref{lem:lmev:garbage}}
    \\
    &\lmev{}{\sysS\mid\cstD}{\cmvOfcst{\cstD}}
    \hspace{32.5mm} \lmev{}{\sysS \mid \cstC[\Adv]\mid\cstD}{\cmvOfcst{\cstD}}
\end{align*}
\end{center}
Now, the thesis directly follows from~\Cref{eq:qnonint:preserving-interference:1,eq:qnonint:preserving-interference:2}.
\qed
\section{Proofs: use cases}
\label{sec:proofs:use-cases}

\subsection*{AMM/Bet (\Cref{ex:amm-bet})}

Consider the starting state:
\begin{align*}    
\sysS & = \walu{\pmvM}{m}{\ETH} \mid
\walpmv{\contract{AMM}}{\waltok{r_0}{\ETH},\waltok{r_1}{\tokT}} \mid
\code{block.num} = d-k \mid \cdots
\\
\cstD & = \walpmv{\contract{Bet}}{\waltok{b}{\ETH}, \code{owner}=\pmvA, \code{tok}=\tokT, \code{rate}=r, \code{deadline}=d}
\end{align*}
When $\pmvM$ is allowed to manipulate the $\contract{AMM}$, she can inflate the exchange rate of $\ETH$,
provided that she possesses sufficient funds.
Formally, if $\pmvM$ swaps $x:\ETH$ for $y:\tokT$,
then according to the criterion specified in $\contract{Bet}.\txcode{win}()$, the winner receives an amount $\lfloor 2bp \rfloor$ only if 
$\contract{AMM}.\txcode{getRate(\ETH)} = \nicefrac{r_0 + x}{r_1 - y} \geq p \cdot r$.
Assuming that $\pmvM$ enters the bet only when she can choose $x$ sufficiently high to satisfy this condition, and for $p \geq \nicefrac{1}{2}$ (since a smaller proportion makes the bet irrational for her), she fires the following sequence of transactions: where, in the $\txcode{swap}$ transaction, $x = m-b \geq 0$ is the number of $\ETH$ units sent to the $\contract{AMM}$, $y = \left\lfloor \nicefrac{x r_1}{r_0 + x} \right\rfloor$ is the number of $\tokT$ units received, and the value that $\pmvM$ bets on is $p = \nicefrac{r_0 + x}{r(r_1 - y)}$:
\begin{align*}
    \sysS \mid \cstD
    & \xrightarrow{\pmvM:\contract{Bet}.\txcode{bet}(\pmvM\ \code{pays}\ \waltok{b}{\ETH},p)}
    &&
    \walpmv{\contract{AMM}}{\waltok{r_0}{\ETH},\waltok{r_1}{\tokT}} \mid
    \walpmv{\contract{Bet}}{\waltok{2b}{\ETH},\code{potShare}=p,\cdots} \mid \cdots
    \\
    & \xrightarrow{\pmvM:\contract{AMM}.\txcode{swap}(\pmvM\ \code{pays}\ \waltok{x}{\ETH}, 0)}
    &&
    \walpmv{\contract{AMM}}{\waltok{r_0 + x}{\ETH},\waltok{r_1 - y}{\tokT}} \mid \walpmv{\contract{Bet}}{\waltok{2b}{\ETH},\cdots} \mid \cdots
    \\
    & \xrightarrow{\pmvM:\contract{Bet}.\txcode{win}()}
    &&
    \walpmv{\contract{AMM}}{\waltok{r_0 + x}{\ETH},\waltok{r_1 - y}{\tokT}} \mid \walpmv{\contract{Bet}}{\waltok{2b- \lfloor2bp\rfloor}{\ETH},\cdots} \mid \cdots
    \\
    & \xrightarrow{\pmvM:\contract{AMM}.\txcode{swap}(\pmvM\ \code{pays}\ \waltok{y}{\tokT}, 0)}
    &&
    \walpmv{\contract{AMM}}{\waltok{r_0}{\ETH},\waltok{r_1}{\tokT}} \mid
    \walpmv{\contract{Bet}}{\waltok{2b- \lfloor2bp\rfloor}{\ETH},\cdots} \mid \cdots
\end{align*}
By~\Cref{eq:lmev:unrestricted} we have:
\begin{align*}
    \lmev{}{\sysS \mid \cstD}{\setenum{\contract{Bet}}}
    &= b - (2b- \lfloor2bp\rfloor) \;
    = \lfloor2bp\rfloor - b \;
    \leq 2bp - b \;
    = \frac{2b(r_0 + x)}{r(r_1 - y)} - b \;
    \\
    &= \frac{2b(r_0 + x)}{r \left( r_1 - \left\lfloor \frac{x r_1}{r_0 + x} \right\rfloor \right)} - b \;
    \leq \frac{2b(r_0 + x)}{r \left( r_1 - \frac{x r_1}{r_0 + x} \right)} - b \;
    \\\\
    &= \frac{2b(r_0 + x)^2}{r r_0 r_1} - b
    = \left( \frac{2 (r_0 + m - b)^2}{r r_0 r_1} - 1 \right) b
\end{align*}
Whereas, if $\pmvM$ was restricted to interact with $\contract{Bet}$ only, there are two cases: if $\contract{AMM}.\code{getRate} = \nicefrac{r_0}{r_1} \geq p \cdot r$,
then $\pmvM$ wins the bet. Otherwise, she loses (and, therefore, $\contract{Bet}$ does not suffer an economic loss).
Even in this case, $\pmvM$ enters the bet only for $p \geq \nicefrac{1}{2}$.
Therefore~\Cref{eq:lmev} gives us:
\begin{align*}
    \lmev{\setenum{\contract{Bet}}}{\sysS \mid \cstD}{\setenum{\contract{Bet}}}
    &= \begin{cases}
        b - \left(2b- \left\lfloor\frac{2br_0}{rr_1}\right\rfloor \right) & \text{if}\  \frac{r_0}{r r_1} \geq \nicefrac{1}{2} \\
        0 & \text{otherwise}
    \end{cases}
    \\
    &= \begin{cases}
        \left\lfloor\frac{2br_0}{rr_1}\right\rfloor - b & \text{if}\  \frac{r_0}{r r_1} \geq \nicefrac{1}{2} \\
        0 & \text{otherwise}
    \end{cases}
    \\
    &> \begin{cases}
        \frac{2br_0}{rr_1} -b -1 & \text{if}\  \frac{r_0}{r r_1} \geq \nicefrac{1}{2} \\
        0 & \text{otherwise}
    \end{cases}
\end{align*}
Hence $\lmev{}{}{}$ interference is estimated through \Cref{def:qnonint} as follows:
\begin{align*}
    \qnonint{\sysS}{\cstD}
    &< \begin{cases}
        1 - \frac{\frac{2br_0}{rr_1} -b -1}{\left( \frac{2 (r_0 + m - b)^2}{r r_0 r_1} - 1 \right) b}
        & \text{if}\ \frac{r_0}{r r_1} \geq \nicefrac{1}{2} \\
        1 & \text{otherwise}
        \end{cases}
    \\
    &=  \begin{cases}
        1 - \frac{(2br_0 - brr_1 - rr_1) r_0}{b(2(r_0+m-b)^2 - rr_0r_1)}
        & \text{if}\ \frac{r_0}{r r_1} \geq \nicefrac{1}{2} \\
        1 & \text{otherwise}
        \end{cases}
    \\
    &=  \begin{cases}
        1 - \frac{2br_0^2 - rr_0r_1(b+1)}{2b(r_0+m-b)^2 - brr_0r_1}
        & \text{if}\ \frac{r_0}{r r_1} \geq \nicefrac{1}{2} \\
        1 & \text{otherwise}
        \end{cases}
        \tag*{\qedex}
\end{align*}

\subsection*{AMM/Lending Pool (\Cref{ex:amm-LP})}

For simplicity, we make the following assumptions:
\begin{inlinelist}
\item $\price{\ETH} = 1 = \price{\tokT}$,
\item the $\contract{AMM}$ is balanced, 
\item $\pmvM$ has not deposited or borrowed tokens from the $\contract{LP}$ yet.
\item the $\contract{LP}$ has sufficient reserves of $\tokT$ to satisfy any borrow request. 
\end{inlinelist}
Note also that our simplified $\contract{LP}$ contract only offers two functions, $\txcode{deposit}$ and $\txcode{borrow}$.
Calling $\txcode{deposit}$ does not extract tokens from the $\contract{LP}$, so the only action through which $\pmvM$ could cause a loss to the $\contract{LP}$ is $\txcode{borrow}$.

Consider the following blockchain state:
\begin{align*}    
\sysS = \walu{\pmvM}{n}{\ETH} \mid
\walpmv{\contract{AMM}}{\waltok{r}{\ETH},\waltok{r}{\tokT}}
&&\cstD = \walpmv{\contract{LP}}{\waltok{a}{\ETH}, \waltok{b}{\tokT}, \code{Cmin} = C_{\it min}, \cdots}
\end{align*}


We start by estimating the unrestricted local MEV, \ie $\lmev{}{\sysS \mid \cstD}{\setenum{\contract{LP}}}$.
When $\pmvM$ can interact with the $\contract{AMM}$, she can maximize the loss caused to $\contract{LP}$ by maximizing her loan amount, or in other words, by inflating her collateralization ratio.
There is only one way to do so: by depositing a portion of her $\ETH$ to the $\contract{LP}$ and by inflating the exchange rate of $\ETH$ provided by the $\contract{AMM}$.
To this purpose, $\pmvM$ partitions its funds as follows:
\begin{itemize}
    \item $x:\ETH$ to perform a swap in the $\contract{AMM}$ in exchange for $y:\tokT$, where $y = \nicefrac{x r}{r + x}$
    \item $(n-x):\ETH$ to deposit in the $\contract{LP}$
\end{itemize}
We denote by $t(x)$ the number of units of token $\tokT$ that $\pmvM$ can borrow from the $\contract{LP}$ as a function of $x$. 

We first note that in order to satisfy the $\code{require}$ constraint within $\txcode{borrow}$ function of $\contract{LP}$, $\pmvM$ must be over-collateralized in the new $\contract{LP}$ state.
Recall that the collateralization of a user is given by the ratio between the value of her minted tokens and that of her debts. 
Regarding $\pmvM$, the value $v_{\it minted}$ of her minted tokens and the value $v_{\it debt}$ of her debts in the new state are given by: 
\begin{align*}
v_{\it minted} & = (n-x) \cdot \frac{\contract{AMM}.\txcode{getRate}(\ETH)}{\contract{AMM}.\txcode{getRate}(\tokT)}
\; = \;
(n-x) \cdot \frac{r - y}{r + x}
\\
v_{\it debt} & = t(x) \cdot \frac{\contract{AMM}.\txcode{getRate}(\tokT)}{\contract{AMM}.\txcode{getRate}(\ETH)}
\; = \;
t(x) \cdot \frac{r + x}{r - y}
\end{align*}
Therefore, $\pmvM$ is over-collateralized, and so her call to $\txcode{borrow}$ does not revert, if:
\[
\frac{v_{\it minted}}{v_{\it debt}}
\; = \;
\frac{(n-x) (r + x)^2}{t(x) (r - y)^2} 
\; \geq \;
C_{min}
\]
This gives us the maximum value of $t(x)$ that $\pmvM$ can choose, which is:
\begin{align*}
    t(x) = \frac{(n-x) (r + x)^2}{C_{min} (r - y)^2}
\end{align*}
To find the value of $x$ that maximizes $t(x)$, we study the function $t(x)$ that gives the loan amount as a function of the deposited amount $x$, subject to the constraint $0 \leq x \leq n$.
Since we working with real-valued amounts, we have that $t(x)$ is continuous.
Thus, we compute its derivative \wrt $x$ and set it to $0$:
\begin{align*}
    \frac{d t(x)}{dx} \;
    = \; \frac{d}{dx} \left( \frac{(n-x) (r + x)^4}{C_{min} r^4} \right) \;
    = \; \frac{4(n-x)(r+x)^3 - (r+x)^4}{C_{min} r^4} = \; 0
\end{align*}
Since $r+x > 0$, we can simplify the above as:
\begin{align*}
    & 4(n-x) = r+x
    && \text{if $0 \leq x \leq n$}
\end{align*}
Therefore, the $x$ that maximizes $t(x)$ is given by:
\begin{align*}
    x & = \begin{cases} 
    \frac{4n - r}{5} & \text{if $4 n \geq r$}
    \\
    0 & \text{otherwise}
    \end{cases}
\end{align*}
In other words, when $4n < r$, $\pmvM$ does not need to interact with the $\contract{AMM}$ to maximize her borrowing capacity.

\medskip\noindent
We can check that $x = \frac{4n-r}{5}$ maximizes $t(x)$ by performing the double derivative test.
We compute the double derivative of $t(x)$ w.r.t $x$, plugging in $x = \frac{4n-r}{5}$, and check if it is $< 0$.
Accordingly:
\begin{align*}
    \frac{d^2 t(x)}{dx^2} \;
    &= \; \frac{d}{dx} \left( \frac{4(n-x)(r+x)^3 - (r+x)^4}{C_{min} \cdot r^4}\right)
    \\\\
    &= \; \frac{12(n-x)(r+x)^2 - 4(r+x^3) - 4(r+x)^3}{C_{min} \cdot r^4}
    \\\\
    &= \; \frac{12(n-x)(r+x)^2 - 8(r+x)^3}{C_{min} \cdot r^4}
\end{align*}
Substituting $4(n-x) = r+x$ we get:
\begin{align*}
    &\frac{d^2 t(x)}{dx^2} \;
    = \; \frac{3(r+x)^3 - 8(r+x)^3}{C_{min} \cdot r^4} \;
    = \; - \frac{5(r+x)^3}{C_{min} \cdot  r^4}
    < 0
\end{align*}
As a result, $\pmvM$ fires the following sequence of transactions with a loan amount $t = \nicefrac{(n-x) (r + x)^2}{C_{\it min} (r - y)^2}$ and the amount received on swap $y = \nicefrac{x r}{r + x}$:
\begin{align*}
    \sysS \mid \cstD
    & \xrightarrow{\pmvM:\contract{LP}.\txcode{deposit}(\pmvM\ \code{pays}\ \waltok{(n-x)}{\ETH})}
    &&
    \walpmv{\contract{AMM}}{\waltok{r}{\ETH},\waltok{r}{\tokT}}
    \mid
    \walpmv{\contract{LP}}{\waltok{a + n - x}{\ETH}, \waltok{b}{\tokT}, \cdots} \mid \cdots
    \\
    & \xrightarrow{\pmvM:\contract{AMM}.\txcode{swap}(\pmvM\ \code{pays}\ \waltok{x}{\ETH}, 0)}
    &&
    \walpmv{\contract{AMM}}{\waltok{r + x}{\ETH},\waltok{r - y}{\tokT}} \mid
    \walpmv{\contract{LP}}{\waltok{a + n - x}{\ETH},\waltok{b}{\tokT}, \cdots} \mid \cdots
    \\
    & \xrightarrow{\pmvM:\contract{LP}.\txcode{borrow}(t,\tokT)}
    &&
    \walpmv{\contract{AMM}}{\waltok{r + x}{\ETH},\waltok{r - y}{\tokT}} \mid
    \walpmv{\contract{LP}}{\waltok{a + n - x}{\ETH},\waltok{b - t}{\tokT}, \cdots} \mid \cdots
    \\
    & \xrightarrow{\pmvM:\contract{AMM}.\txcode{swap}(\pmvM\ \code{pays}\ \waltok{y}{\tokT}, 0)}
    &&
    \walpmv{\contract{AMM}}{\waltok{r}{\ETH},\waltok{r}{\tokT}}
    \mid
    \walpmv{\contract{LP}}{\waltok{a + n - x}{\ETH},\waltok{b - t}{\tokT}, \cdots} \mid \cdots
\end{align*}
By~\Cref{eq:lmev:unrestricted} we get:
\begin{align*}
    \lmev{}{\sysS \mid \cstD}{\setenum{\contract{LP}}} 
    & = t + x - n \;
    = \frac{(n - x) (r + x)^2}{C_{min} (r - y)^2} + x - n \;
    \\\\
    & = (n - x) \left( \frac{(r + x)^2}{C_{\it min} (r - \frac{xr}{r+x})^2} - 1 \right) \;
    \\\\
    & = (n - x) \left( \frac{(r + x)^4}{r^4C_{\it min}} - 1 \right) \;
    \\\\
    & = \begin{cases}
        \left(n - \frac{4n - r}{5}\right) \left( \frac{\left(r + \frac{4n - r}{5}\right)^4}{r^4C_{\it min}} - 1 \right) \;
        & \text{if}\ 4n \geq r \\
        n \left( \frac{1}{C_{min}} - 1 \right) & \text{otherwise}
        \end{cases}
    \\\\
    & = \begin{cases}
        \left( \frac{n+r}{5} \right) \left( \frac{\left(\frac{4(n+r)}{5}\right)^4}{r^4 C_{min}} - 1 \right) \;
        & \text{if}\ 4n \geq r \\
        n \left( \frac{1}{C_{min}} - 1 \right) & \text{otherwise}
        \end{cases}
    \\\\
    & = \begin{cases}
        \left( \frac{n+r}{5} \right) \left( \frac{1}{C_{min}} \left( \frac{4(n+r)}{5r}\right)^4 - 1 \right) \;
        & \text{if}\ 4n \geq r \\
        n \left( \frac{1}{C_{min}} - 1 \right) & \text{otherwise}
        \end{cases}
\end{align*}

We note two key aspects of the transaction sequence fired by $\pmvM$.
Firstly, the ordering of $\txcode{deposit}$ and the (initial) $\txcode{swap}$ transactions is irrelevant.
Hence, they can be interchanged without causing a difference to the loss caused to $\contract{LP}$.
Secondly, firing the (final) $\txcode{swap}$, \ie de-manipulating the $\contract{AMM}$ only affects the wealth of $\pmvM$ and not the $\contract{LP}$.
Hence, it does not affect the MEV extractable from $\contract{LP}$.
Nevertheless, we include it in the transaction sequence to reflect the attack execution employed in practice.

We now calculate the restricted local MEV, \ie $\lmev{\setenum{\contract{LP}}}{\sysS \mid \cstD}{\setenum{\contract{LP}}}$.
In this case, the only way $\pmvM$ can maximize her borrowing capacity is by depositing her total available capital to the $\contract{LP}$.
Hence, $\pmvM$ deposits $n:\ETH$. 
The collateralization of $\pmvM$ after a call to $\txcode{borrow}$ 
for $t'$ units of $\tokT$ is given by:
\begin{align*}
\frac{v_{\it minted}}{v_{\it debt}}
& = 
\frac{n \cdot \nicefrac{\contract{AMM}.\txcode{getRate}(\ETH)}{\contract{AMM}.\txcode{getRate}(\tokT)}}
{t' \cdot \nicefrac{\contract{AMM}.\txcode{getRate}(\tokT)}{\contract{AMM}.\txcode{getRate}(\ETH)}}
\; = \;
\frac{n \cdot \nicefrac{r}{r}}
{t' \cdot \nicefrac{r}{r}}
\; = \;
\frac{n}{t'(x)}
\end{align*}
Thus, the call to $\txcode{borrow}$ does not revert iff:
\[
    \frac{n}{t'} \geq C_{min}
\]
From this, we obtain that the maximum amount that $\pmvM$ can borrow is given by:
\[
    t' = \frac{n}{C_{min}}
\]
By~\Cref{eq:lmev}, we have that:
\begin{align*}
    \lmev{\setenum{\contract{LP}}}{\sysS \mid \cstD}{\setenum{\contract{LP}}}
    = \; t' - n \;
    = \; \frac{n}{C_{min}} - n \;
    = \; n \left( \frac{1}{C_{min}} - 1 \right)
\end{align*} 
To conclude, we estimate $\lmev{}{}{}$ interference through \Cref{def:qnonint} as follows:
\begin{align*}
    \qnonint{\sysS}{\cstD}
    &= \begin{cases}
        1 \; - \; \frac{n \left( \frac{1}{C_{min}} - 1 \right)}{\left( \frac{n+r}{5} \right) \left( \frac{1}{C_{min}} \left( \frac{4(n+r)}{5r}\right)^4 - 1 \right)} \;
        & \text{if}\ 4n \geq r
        \\\\
        1 \; - \; \frac{n \left( \frac{1}{C_{min}} - 1 \right)}{n \left( \frac{1}{C_{min}} - 1 \right)} \; & \text{otherwise}
        \end{cases}
    \\\\
    &= \begin{cases}
        1 \; - \; \frac{n(1-C_{min})}{C_{min}} \cdot \frac{5}{n+r} \cdot \frac{(5r)^4C_{min}}{(4(n+r))^4 - (5r)^4C_{min}} \;
        & \text{if}\ 4n \geq r \\
        0 \; & \text{otherwise}
        \end{cases}
    \\\\
    &= \begin{cases}
        1 \; - \; \frac{5^5 r^4 n(1-C_{min})}{(n+r)\left( 4^4(n+r)^4 - (5r)^4C_{min}\right)} \;
        & \text{if}\ 4n \geq r \\
        0 \; & \text{otherwise}
        \end{cases}
    \tag*{\qedex}
\end{align*}

}
{}

\end{document}